\newtheorem{thm}{Theorem}
\newtheorem{defin}{Definition}
\newtheorem{lem}{Lemma}
\newtheorem{assum}{Assumption}
\newtheorem{rem}{Remark}
\newtheorem{con}{Condition}
	\providecommand\BibTeX{{%
			\normalfont B\kern-0.5em{\scshape i\kern-0.25em b}\kern-0.8em\TeX}}}
\journal{Chaos, Solitons and Fractals}
\begin{document}
\captionsetup[figure]{labelfont={bf},labelformat={default},labelsep=period,name={Fig.}}
\begin{frontmatter}
\title{Discovering overlapping communities in multi-layer directed networks}
\author[label1]{Huan Qing\corref{cor1}}
\ead{qinghuan@u.nus.edu;qinghuan@cqut.edu.cn}
\cortext[cor1]{Corresponding author.}
\address[label1]{School of Economics and Finance, Chongqing University of Technology, Chongqing, 400054, China}
\begin{abstract}
Community detection in multi-layer undirected networks has attracted considerable attention in recent years. However, multi-layer directed networks are common in the real world, and existing community detection methods often either ignore the asymmetric structure in multi-layer directed networks or assume that every node solely belongs to a single community, significantly limiting their applicability to overlapping multi-layer directed networks, where nodes can belong to multiple communities simultaneously. To fill this gap, this article explores the challenging problem of detecting overlapping communities in multi-layer directed networks. Our goal is to understand the underlying asymmetric overlapping community structure by analyzing the mixed memberships of nodes. We introduce a novel multi-layer mixed membership stochastic co-block model (multi-layer MM-ScBM) to model overlapping multi-layer directed networks. We develop a spectral procedure to estimate nodes' memberships in both sending and receiving patterns. Our method uses a successive projection algorithm on a few leading eigenvectors of two debiased aggregation matrices. To our knowledge, this is the first work to detect asymmetric overlapping communities in multi-layer directed networks. We demonstrate the consistent estimation properties of our method by providing per-node error rates under the multi-layer MM-ScBM framework. Our theoretical analysis reveals that increasing the overall sparsity, the number of nodes, or the number of layers can improve the accuracy of overlapping community detection. Extensive numerical experiments validate these theoretical findings. We also apply our method to one real-world multi-layer directed network, gaining insightful results.
\end{abstract}
\begin{keyword}
Multi-layer directed networks\sep multi-layer mixed membership stochastic co-block model\sep overlapping community detection \sep spectral clustering
\end{keyword}
\end{frontmatter}
\section{Introduction}\label{sec1}
In recent years, multi-layer networks, also known as multiplex networks, have garnered considerable attention for their ability to model complex relationships and interactions within complex systems. A multi-layer network comprises numerous interconnected networks, each layer representing a distinct type of relationship or interaction for a common set of nodes. For example, in social network analysis, nodes may represent individuals, while distinct layers might reflect friendships, familial ties, or work associations. If the relationship between any two nodes is symmetric and undirected, these multi-layer networks are classified as multi-layer undirected networks. Conversely, when the relationship is asymmetric and directed, they are known as multi-layer directed networks \citep{su2024spectral}.

Community detection is a powerful tool for learning the latent community structure within networks. Its core objective is to extract community information of nodes solely based on the network's topology. For a comprehensive understanding of this technique, refer to \citep{fortunato2010community,fortunato2016community,javed2018community}, which provide detailed introductions to the field of community detection. Notably, in real-world networks, a node can often be affiliated with multiple communities \citep{MMSB,xie2013overlapping,OCCAM,mao2021estimating}. In the context of directed networks, where interactions between nodes are asymmetric and directional, each node exhibits two distinct community patterns: a sending pattern and a receiving pattern \citep{rohe2016co, su2024spectral}. The overlapping nature of nodes, their asymmetric relationships, and the multi-layer characteristics of multi-layer directed networks further complicate community detection. We focus on the challenging problem of discovering asymmetric overlapping communities in multi-layer directed networks in this article.
\begin{table}[h!]
\centering
\caption{A brief summary of community detection methods in multi-layer networks. Abbreviations: MLE (Maximum Likelihood Estimation), LSE (Least Squares Estimation), and NMF (Nonnegative Matrix Factorization).}
\label{LiteratureReview}
\resizebox{\columnwidth}{!}{
\begin{tabular}{ccccc}
\hline
Methods considered&Methods type&Directed networks?&Overlapping communities?&Theoretical guarantees?\\
\hline
Spectral Mean and Profile MLE \cite{han2015consistent}&Spectral-based and MLE-based&No&No&Yes\\
S2-jNMF \cite{ma2018community}&NMF-based&No&No&No\\
CSNMF and CSNMTF \cite{gligorijevic2018non}&NMF-based&No&No&No\\
SCDSBM \cite{pensky2019spectral}&Spectral-based&No&No&Yes\\
IterModMax \cite{pamfil2019relating}&Modularity-based&No&No&No\\
MLCD \cite{shahmoradi2019multilayer}&Multi-objective-optimization-based&No&Yes&No\\
Multimodbp \cite{weir2020multilayer}&Modularity-based&No&No&No\\
LSE \cite{lei2020consistent}&LSE-based &No&No&Yes\\
OLMF, Mean adj, Coreg spec, and SpecK \cite{paul2020spectral}&Matrix-factorization-based and spectral-based&No&No&Yes\\
Cr-ENMF \cite{ma2020co}&NMF-based&No&No&No\\
UASE \cite{gallagher2021spectral}&Spectral-based&No&No&Yes\\
TWIST \cite{jing2021community}&Tensor-based&No&No&Yes\\
MASE \cite{arroyo2021inference}&Spectral-based&No&No&Yes\\
MjNMF \cite{ma2021identification}&NMF-based&No&No&No\\
NE2NMF \cite{li2021detecting}&NMF-based&No&No&No\\
MA, MVM, and MVP \cite{venturini2022variance}&Modularity-based&No&No&No\\
ALMA \cite{fan2022alma}&Tensor-based&No&No&Yes\\
SSJSNMF \cite{lv2022community}&NMF-based&No&No&No\\
SoS-Debias \cite{lei2023bias}&Spectral-based&No&No&Yes\\
CAMSBM \cite{xu2023covariate}&Tensor-based&No&No&Yes\\
ISC \cite{huang2023spectral}&Spectral-based&No&No&Yes\\
HDP-LPCM \cite{daniel2023bayesian}&Bayesian-based&No&No&No\\
McNMF \cite{jia2023clustering}&NMF-based&No&No&No\\
CDMA \cite{li2023multiplex}&Motif-based&No&No&No\\
MOEA-CPI \cite{gao2023multilayer}&Multi-objective-evolutionary-based&No&No&No\\
CDMMHN \cite{liu2024motif}&Motif-modularity-based&No&No&No\\
WPTDD \cite{peng2024weighted}&Tensor-based&No&No&No\\
HSBM \cite{amini2024hierarchical}&Bayesian-based&No&No&No\\
DIMPLE-GDPG \cite{pensky2024clustering}&Tensor-based&No&No&Yes\\
MX-ONMTF \cite{ortiz2024community}&NMF-based&No&No&No\\
SPATMME \cite{agterberg2024estimating}&Tensor-based&No&Yes&Yes\\
MCLCD \cite{cao2024multi}&NMF-based&No&No&No\\
NMFTD \cite{tang2025detecting}&NMF$\&$Tensor-based&No&Yes&Yes\\
NSoA and NDSoSA \cite{qingMLDCSBM}&Spectral-based&No&No&Yes\\
DSoG \cite{su2024spectral}&Spectral-based&Yes&No&Yes\\
CSPDSoS (ours)&Spectral-based&Yes&Yes&Yes\\
\hline
\end{tabular}}
\end{table}

A widely studied variant of this problem focuses on the scenario where the network is undirected, and each node exclusively belongs to a single community. For example, the multi-layer stochastic block model (multi-layer SBM), as explored in various studies \citep{han2015consistent,paul2020spectral,lei2020consistent,lei2023bias}, generates each layer of a multi-layer undirected network from the stochastic block model (SBM) \citep{holland1983stochastic}, a popular statistical model for single-layer undirected networks in which communities are mutually exclusive. \citep{han2015consistent} demonstrated the estimation consistency of a spectral approach designed based on the sum of adjacency matrices, specifically under conditions where the number of layers increases while the number of nodes remains fixed, within the framework of the multi-layer SBM. \citep{paul2020spectral} conducted a study on the theoretical guarantees of spectral and matrix factorization methods within the context of the multi-layer SBM. \citep{lei2020consistent} introduced a least squares estimation method for communities and demonstrated its estimation consistency within the multi-layer SBM. Notably, the debiased spectral clustering algorithm introduced in \citep{lei2023bias}, which offers theoretical guarantees, significantly outperforms the spectral method using the sum of adjacency matrices, the Linked Matrix Factorization method, and the Co-regularized Spectral Clustering method studied in \citep{paul2020spectral}. Also see \citep{jing2021community,arroyo2021inference,fan2022alma,xu2023covariate} for other recent works in multi-layer undirected networks in which communities are disjoint.

Unluckily, applying all aforementioned algorithms to discover communities in multi-layer directed networks is not feasible. To address this limitation, \citep{su2024spectral} proposed a spectral co-clustering algorithm that leverages the debiased approach introduced in \citep{lei2023bias} to identify disjoint communities under the multi-layer stochastic co-block model (multi-layer ScBM). This model extends the stochastic co-block model (ScBM) presented in \citep{rohe2016co} from single-layer directed networks to multi-layer directed networks. While the works in \citep{su2024spectral} address the asymmetric and directional characteristics in multi-layer directed networks, they do not extend to the general problem addressed in this article, where a node may simultaneously belong to more than one community. Recently, \citep{agterberg2024estimating,tang2025detecting} introduced tensor-based methods with theoretical guarantees for estimating mixed memberships in overlapping multi-layer undirected networks. Furthermore, various other approaches for community detection in multi-layer networks exist, such as NMF-based and modularity-based techniques, as summarized in Table \ref{LiteratureReview}. Spectral-based methods, such as those proposed in \citep{han2015consistent,lei2023bias,su2024spectral}, which rely on the eigendecomposition or singular value decomposition of certain matrices, are also included in this table. However, as Table \ref{LiteratureReview} clearly indicates, all previous methods either focus on non-overlapping communities or are tailored for multi-layer undirected networks. To the best of our knowledge, there is currently no efficient method available for estimating nodes' mixed memberships in overlapping multi-layer directed networks, where the relationships between nodes are directional and each node can belong to multiple communities with varying weights. This article aims to fill this gap. The key contributions of this article are as follows:

(1) We propose a flexible and interpretable statistical model, the multi-layer mixed membership stochastic co-block model (multi-layer MM-ScBM). Our model enables nodes within multi-layer directed networks to be affiliated with multiple communities.

(2) We propose a spectral method by running a vertex hunting algorithm \citep{gillis2013fast} on a few leading eigenvectors of two debiased aggregation matrices to estimate nodes' mixed memberships for both the sending and receiving patterns under the multi-layer MM-ScBM framework. To our knowledge, our work is the first to discover overlapping communities in multi-layer directed networks.

(3) By deriving node-wise error bounds, we demonstrate that our method ensures consistent estimation. Our theoretical results underscore the advantages of augmenting overall sparsity, the number of nodes, or the number of layers on the problem of overlapping community detection in multi-layer directed networks. These theoretical results are verified by synthetic data, and we further illustrate the practical applications of our method by considering one real-world multi-layer directed network with encouraging results.

The structure of the remaining sections of this article is outlined in the following manner. Section \ref{RelatedWorks} covers related works. Section \ref{sec3} introduces the model. Section \ref{sec4} presents the spectral method. Section \ref{sec5} presents the consistency results. Section \ref{sec6experiments} includes both simulations and practical applications. Finally, Section \ref{sec7} concludes the article. Technical proofs are in \ref{SecProofs}.

\emph{Notation:} For any positive integer $q$, let $[q]$ be the set $\{1,2,\ldots,q\}$, $I_{q\times q}$ be the $q\times q$ identity matrix, and $\|x\|_{q}$ be the $l_{q}$ norm of vector $x$. For any matrix $M$, we use $M'$ to denote its transpose, $\kappa(M)$ for its condition number, $\|M\|_{F}$ for the Frobenius norm, $\mathrm{rank}(M)$ for its rank, $M(S,:)$ (and $M(:,S)$) for the submatrix comprising the rows (and columns) of $M$ indexed by set $S$, $\lambda_{k}(M)$ for the $k$-th largest eigenvalue (ordered in absolute value), $\|M\|_{\infty}$ for the maximum absolute row sum, and $\|M\|_{2\rightarrow\infty}$ for the maximum row-wise $l_{2}$ norm. We further denote expectation with $\mathbb{E}[\cdot]$ and probability with $\mathbb{P}(\cdot)$. Additionally, the phrase ``leading $K$ eigenvectors" refers to the eigenvectors respective to the top $K$ eigenvalues ordered in absolute value. Lastly, $e_{i}$ represents the standard basis vector that has the $i$-th element equal to 1, while all other elements are set to zero.
\section{Related work}\label{RelatedWorks}
The field of community detection in multi-layer networks has garnered significant attention, with various approaches being proposed in the literature to address the complexity of identifying latent community structures within such networks. In this section, we provide a structured discussion of relevant works, highlighting their connections and differences with respect to our work in this article.
\subsection{Spectral-based methods}
Recall that Table \ref{LiteratureReview} summarizes substantial existing works, including spectral-based, NMF-based, and modularity-based methods, for community detection in multi-layer networks. However, only a few works listed in this table are closely related to this study because this article focuses specifically on spectral-based methods due to their well-known efficiency, ease of implementation, and feasibility for theoretical analysis \citep{ng2001spectral, von2007tutorial,lei2015consistency}.

Within the spectral-based methods for multi-layer networks, our proposed method has a close relationship with the debiased spectral clustering methodologies presented in  \citep{lei2023bias,su2024spectral,qingMLDCSBM}. For non-overlapping community detection in multi-layer undirected networks, \citep{lei2023bias} initially proposes a debiased spectral clustering algorithm by applying the K-means algorithm to several leading eigenvectors of an aggregation matrix under the multi-layer SBM model. \citep{su2024spectral} extends the idea in \citep{lei2023bias} from multi-layer undirected networks modeled by multi-layer SBM to multi-layer directed networks modeled by multi-layer ScBM. \citep{qingMLDCSBM} extends works in \citep{lei2023bias} by considering nodes' degree variations in the real world, enhancing the debiased spectral clustering framework. The numerical findings in \citep{lei2023bias, su2024spectral,qingMLDCSBM} indicate that debiased spectral clustering achieves superior performance compared to traditional spectral clustering methods that rely on the summation of adjacency matrices or the squared summation of adjacency matrices. In particular, \cite{qingMLDCSBM} theoretically proved the superiority of the debiased spectral method over the spectral mean method \citep{han2015consistent}. However, works in \citep{lei2023bias, su2024spectral,qingMLDCSBM} are limited to non-overlapping communities, failing to capture the overlapping community structure prevalent in real-world networks.
\subsection{Overlapping community detection}
While the spectral-based methods proposed in \citep{lei2023bias, su2024spectral,qingMLDCSBM} have been successful in detecting non-overlapping communities in multi-layer networks, the task of identifying overlapping communities is more challenging. The mixed membership stochastic block model (MMSB) \citep{MMSB} allows each node to belong to multiple communities for single-layer undirected networks, and it is one of the most popular statistical models in network science due to its generality, flexibility, and interpretability. The multi-way blockmodel introduced in \cite{airoldi2013multi} extends MMSB from undirected networks to directed networks.

Spectral-based methods have also been developed for overlapping community detection in single-layer networks. For example, based on the observation that there exists an ideal simplex structure inherent in the leading eigenvector matrix of the expectation adjacency matrix, \citep{mao2021estimating} develops an efficient spectral algorithm to estimate overlapping community memberships and derives the node-wise error bounds under MMSB.   \citep{jin2024mixed} develops an algorithm called Mixed-SCORE with theoretical guarantees to fit a degree-corrected version of MMSB, where Mixed-SCORE is also designed based on an ideal simplex structure inherent in the row-wise ratio matrix of the leading eigenvector matrix from the expectation adjacency matrix. The Mixed-SCORE algorithm can be viewed as an extension of the SCORE algorithm developed in \cite{SCORE} from non-overlapping networks to overlapping networks. \citep{qing2023regularized} designs two regularized spectral clustering algorithms based on a regularized Laplacian matrix to estimate nodes' memberships under MMSB. For works on overlapping community detection in \citep{mao2021estimating,qing2023regularized,jin2024mixed}, vertex hunting algorithms developed in \citep{araujo2001successive,gillis2013fast,gillis2015semidefinite,jin2024mixed} are applied to hunt for pure nodes, as defined later.
\subsection{Connections to previous work}
Despite the progress made in overlapping community detection, existing methods are primarily designed for single-layer networks. To address the limitations of existing approaches, we first propose a novel statistical model multi-layer MM-ScBM to model overlapping communities in multi-layer directed networks. This model extends the MMSB and multi-way blockmodel from single-layer networks to multi-layer directed networks, capturing the asymmetric and directional relationships between nodes. We then develop a spectral-based method to estimate nodes' mixed memberships in both the sending and receiving patterns under the multi-layer MM-ScBM framework. Our method integrates the debiased idea in \citep{lei2023bias,su2024spectral,qingMLDCSBM} with the vertex hunting technique in \citep{mao2021estimating} to estimate nodes's mixed memberships under the proposed model effectively. Our work builds upon and extends the research conducted in \citep{lei2023bias}, \citep{su2024spectral}, and \citep{mao2021estimating}, transitioning from non-overlapping multi-layer undirected networks, non-overlapping multi-layer directed networks, and overlapping single-layer undirected networks to overlapping multi-layer directed networks, respectively.

To demonstrate the effectiveness of our proposed method, we theoretically reveal its consistent estimation properties. Our theoretical framework leverages the Bernstein inequality from \cite{tropp2012user} and Theorem 4.2 from \citep{cape2019the} to derive the row-wise eigenspace error for two debiased aggregation matrices. Subsequently, we employ theoretical frameworks from \citep{mao2021estimating} to establish row-wise error bounds. By doing so, we broaden the scope of the theoretical analysis in \citep{mao2021estimating}, extending it from single-layer undirected networks to multi-layer directed networks, and from adjacency matrices to debiased aggregation matrices.
\section{Model framework: multi-layer MM-ScBM}\label{sec3}
Recall that few previous works have focused on detecting the latent overlapping community structures in multi-layer directed networks, to fill this gap, we need to solve the following fundamental questions:

\begin{itemize}
  \item \textbf{Question 1:} How can we develop a statistical model to describe/capture/model multi-layer directed networks with latent overlapping community property such that nodes can have mixed community memberships?
  \item \textbf{Question 2:} Assuming we indeed propose a statistical model to address Question 1, how can we design an efficient algorithm to estimate the true mixed memberships of nodes in multi-layer directed networks generated from this model?
  \item \textbf{Question 3:} Assuming we indeed devise an algorithm to address Question 2, how can we guarantee that this algorithm enjoys consistent estimation properties under the proposed model? Here, consistent estimation properties indicate that the theoretical upper bound of the error rate of the algorithm decreases to zero as the number of nodes or network layers increases to infinity under this proposed model.
  \item \textbf{Question 4:} Assuming we theoretically demonstrate the consistent estimation of the proposed algorithm under the proposed model and derive the theoretical upper bound on its error rate, how can we numerically validate the effectiveness of the proposed algorithm?
\end{itemize}

The rest of this section details the statistical model to address Question 1, while the subsequent three sections present the algorithm, consistency results, and numerical experiments to address Questions 2, 3, and 4, respectively.

\begin{figure}
\centering
    \includegraphics[width=0.8\textwidth]{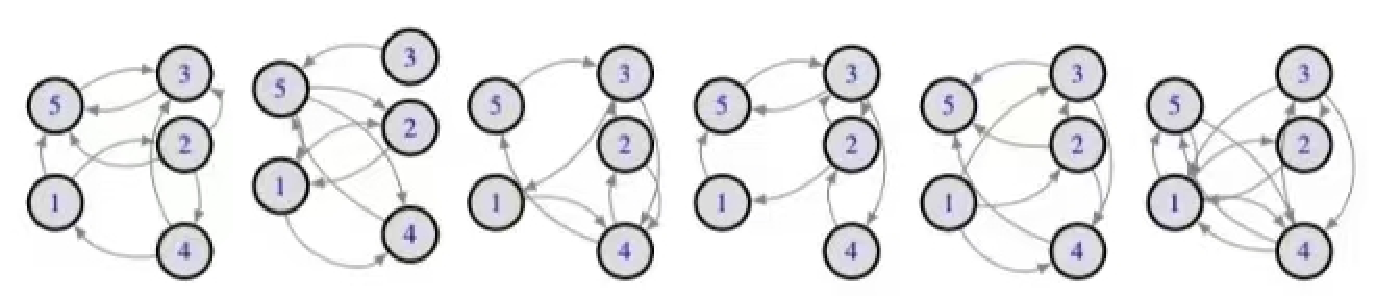}
\caption{A toy multi-layer directed network with 5 nodes and 6 layers.}
\label{biExample}
\end{figure}
This article assumes that the multi-layer directed network $\mathcal{N}$ comprises $L$ distinct layers, each sharing a common set of $n$ interconnected nodes. Let $A_{l} \in \{0,1\}^{n \times n}$ denote the adjacency matrix of the $l$-th layer directed network $\mathcal{N}_{l}$, where $A_{l}(i,j) = 1$ indicates the presence of a directed edge from node $i$ to node $j$ (i.e., $i \rightarrow j$) and $A_{l}(i,j) = 0$ otherwise, for all $i, j \in [n]$ and $l \in [L]$. Therefore, in $A_{l}$, its $i$-th row captures the outgoing edges from node $i$, while its $i$-th column details the incoming edges to node $i$ at the $l$-th layer for $i\in[n], l\in[L]$. Fig.~\ref{biExample} displays a toy example of a multi-layer directed network, showcasing how nodes establish distinct connections across various layers. For the sending pattern, we assume that all nodes are classified into $K$ distinguishable row communities:
\begin{align}\label{RowCommunities}
C_{r,1}, C_{r,2}, \ldots, C_{r,K}.
\end{align}

Let $\Pi_{r}$ be an $n\times K$ row membership matrix common to all layers. Specifically, $\Pi_{r}(i,k)$ represents the ``weight" indicating the affiliation of node $i$ to the $k$-th row community $C_{r,k}$ for $i\in[n], k\in[K]$.

Analogously, for the receiving pattern, we assume that all nodes are categorized into $K$ column communities:
\begin{align}\label{ColumnCommunities}
C_{c,1}, C_{c,2}, \ldots, C_{c,K}.
\end{align}

Furthermore, let $\Pi_{c}$ be an $n\times K$ column membership matrix common across all layers and $\Pi_{c}(i,k)$ represents the ``weight" indicating the affiliation of node $i$ to the $k$-th column community $C_{c,k}$ for $i\in[n], k\in[K]$.

Because $\Pi_{r}(i,:)$ is the row membership vector of node $i$, we have $\|\Pi_{r}(i,:)\|_{1}=1$ and $\Pi_{r}(i,k)\in[0,1]$ for $i\in[n], k\in[K]$. Similarly, we have $\|\Pi_{c}(i,:)\|_{1}=1$ and $\Pi_{c}(i,k)\in[0,1]$.

A node $i$ is considered ``pure" in the sending pattern if $\Pi_{r}(i,:)$ has exactly one entry equal to 1 and the remaining $(K-1)$ entries are 0. Conversely, it is deemed ``mixed" if this condition does not hold for $i\in[n]$. A similar definition holds for nodes in the receiving pattern. For the rest of this article, we assume that $K$ is known and
\begin{align}\label{PureNodes}
\text{each community must contain at least one pure node.}
\end{align}

Without loss of generality, we rearrange the nodes' order such that $\Pi_{r}(\mathcal{I}_{r},:) = I_{K\times K}$ and $\Pi_{c}(\mathcal{I}_{c},:) = I_{K\times K}$. Here, $\mathcal{I}_{r}$ represents the indices of $K$ pure nodes, each belonging to a distinct row community, while $\mathcal{I}_{c}$ denotes the indices of $K$ pure nodes, each from a unique column community.

For $l\in[L]$, let $B_{l}$ be a $K\times K$ heterogeneous block probability matrix such that
\begin{align}\label{Bl}
B_{l}(k,\tilde{k})\in[0,1]\qquad k\in[K],\tilde{k}\in[K].
\end{align}

For each $l$, $B_{l}$ can be asymmetric or symmetric and it captures the community-wise edge probabilities within this layer \citep{lei2023bias,su2024spectral}. In this article, $B_{l}$ can be different for different layers.

For each adjacency matrix $A_{l}$ for $l\in[L]$, the model considered in this article assumes that it is generated from a directed version of MMSB. In particular, we consider the following model.
\begin{defin}\label{MLMM-ScBM}
(multi-layer MM-ScBM) For $l\in[L]$, our multi-layer mixed membership stochastic co-block model (multi-layer MM-ScBM) generates the $l$-th adjacency matrix $A_{l}$ in the following way:
\begin{align}\label{OmegaL}
\Omega_{l}:=\rho\Pi_{r}B_{l}\Pi'_{c}~~~~~~~~~A_{l}(i,j)\sim\mathrm{Bernoulli}(\Omega_{l}(i,j))\qquad i\in[n], j\in[n],
\end{align}
where $\rho\in(0,1]$.
\end{defin}

In the definition of multi-layer MM-ScBM, given that the heterogeneous block probability matrix $B_{l}$ can be different for different layers, this guarantees that $A_{l}$ may be generated from the model with common membership matrices and possibly different edge probabilities. For $i\in[n], j\in[n], l\in[L]$, recall that $\Pi_{r}\in[0,1]^{n\times K}, \Pi_{c}\in[0,1]^{n\times K}, \|\Pi_{r}(i,:)\|_{1}=1, \|\Pi_{c}(i,:)\|_{1}=1$, and $B_{l}\in[0,1]^{K\times K}$, by basic algebra, we have $0\leq\Pi_{r}(i,:)B_{l}\Pi'_{c}(j,:)\leq1$. Then, under the multi-layer MM-ScBM model defined in Equation (\ref{OmegaL}), we have
\begin{align*}
&\mathbb{P}(A_{l}(i,j)=1)=\Omega_{l}(i,j)=\rho\Pi_{r}(i,:)B_{l}\Pi'_{c}(j,:)\leq\rho,\\
&\mathbb{P}(A_{l}(i,j)=0)=1-\Omega_{l}(i,j)=1-\rho\Pi_{r}(i,:)B_{l}\Pi'_{c}(j,:)\geq1-\rho,
\end{align*}
which implies that a reduction in the value of $\rho$ leads to a decrease in the likelihood of an edge forming between any two nodes, thereby controlling the overall sparsity of the multi-layer bipartite network. Consequently, we refer to $\rho$ as the sparsity parameter in this article. Fig.~\ref{PircB123Omega123} presents a toy example to facilitate a better understanding of Equation (\ref{OmegaL}). In this example, we illustrate the settings of $\Pi_{r}$, $\Pi_{c}$, $\{B_{l}\}^{L}_{l=1}$, and $\rho$. Using Equation (\ref{OmegaL}), we can immediately derive $\{\Omega_{l}\}^{L}_{l=1}$. From this figure, it is straightforward to observe that all elements of $\Omega_{l}$ lie within the range $[0,1]$ for $\l\in[L]$, given that $\Omega_{l}$'s elements represent probabilities. Given the set of $L$ adjacency matrices $\{A_{l}\}^{L}_{l=1}$, our goal is to estimate $\Pi_{r}$ and $\Pi_{c}$ in this article. Similar to the consistency analysis conducted in \citep{lei2015consistency,mao2021estimating,qing2023regularized,qing2024bipartite} for single-layer networks, we will permit $\rho$ to approach zero as $n$ and $L$ increase in our theoretical analysis.
\begin{figure}
\centering
\resizebox{\columnwidth}{!}{
    \includegraphics[width=0.02\textwidth]{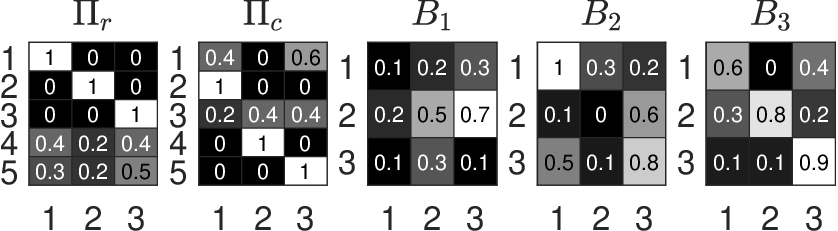}
}
\resizebox{\columnwidth}{!}{
    \includegraphics[width=2\textwidth]{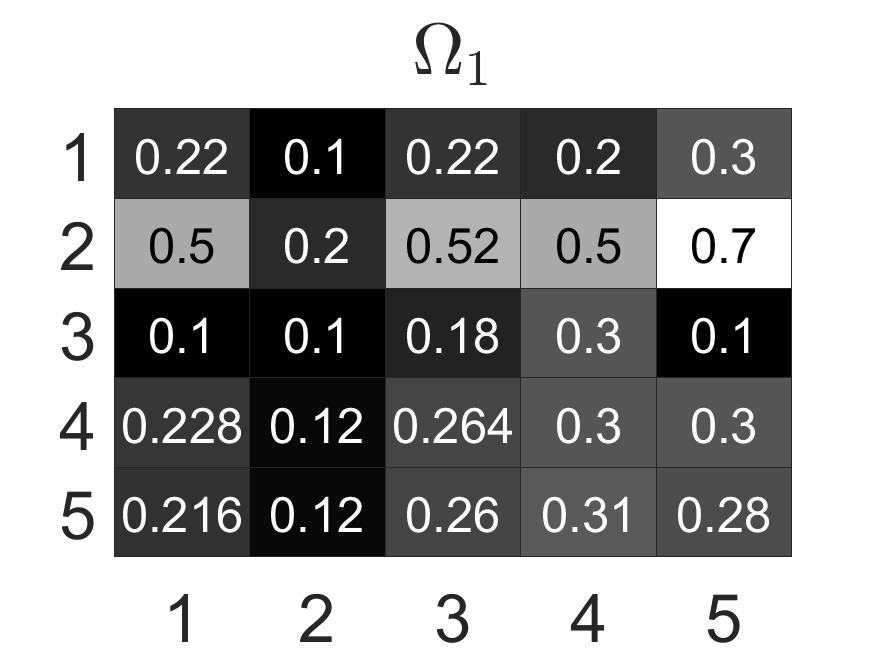}
    \includegraphics[width=2\textwidth]{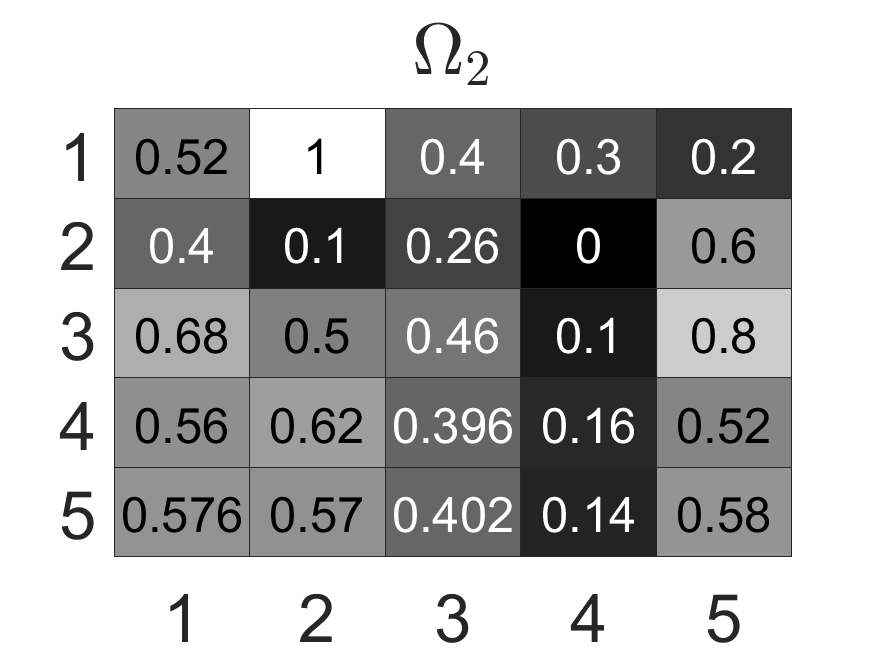}
    \includegraphics[width=2\textwidth]{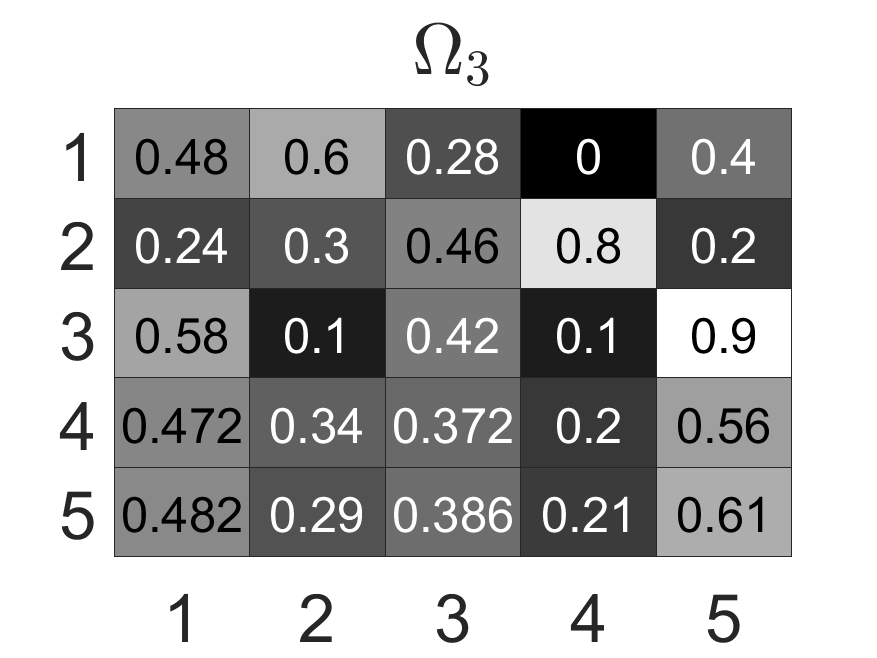}
}
\caption{A toy example for the settings of $\Pi_{r}$, $\Pi_{c}$, and $\{B_{l}\}^{L}_{l=1}$. The sparsity parameter $\rho$ is set as 1 here. In this example, we consider a 3-layer directed network with 5 nodes belonging to 3 row (and column) communities with different weights. The row and column membership matrices are $\Pi_{r}$ and $\Pi_{c}$, respectively. One can easily check that $\Omega_{l} = \rho \Pi_{r} B_{l} \Pi'_{c}$ for $l \in [L]$.}
\label{PircB123Omega123}
\end{figure}
\begin{rem}
Our model encompasses numerous previous models as specific cases. When all nodes are pure, it simplifies to the multi-layer ScBM discussed in \cite{su2024spectral}. If the network is undirected and all nodes are pure, it reduces to the multi-layer SBM studied in \cite{han2015consistent, paul2020spectral,lei2020consistent, lei2023bias}. Further, for $L=1$, it degenerates to the multi-way blockmodel presented in \cite{airoldi2013multi}. In the case of $L=1$ and an undirected network, it reduces to the MMSB model proposed in \cite{MMSB}. Moreover, when all nodes are pure and the network only has a single layer, it reduces to the ScBM model described in \cite{rohe2016co}. Finally, if the network is undirected, it simplifies to the SBM introduced in \cite{holland1983stochastic}.
\end{rem}
\section{Methodology: CSPDSoS}\label{sec4}
In this section, we introduce an efficient spectral method to estimate membership matrices $\Pi_{r}$ and $\Pi_{c}$ for adjacency matrices $\{A_{l}\}^{L}_{l=1}$ generated from the multi-layer MM-ScBM by Equation (\ref{OmegaL}). To explain the main idea behind our method, we provide an ideal method with knowing the $L$ population adjacency matrices $\{\Omega_{l}\}^{L}_{l=1}$ in Section \ref{sec3oracle} and modify it to the real case with observing the $L$ adjacency matrices $\{A_{l}\}^{L}_{l=1}$ in Section \ref{sec3real}.
\subsection{Oracle case}\label{sec3oracle}
For the oracle case when $\{\Omega_{l}\}^{L}_{l=1}$ are known, we define two aggregate matrices $\tilde{S}_{r}$ and $\tilde{S}_{c}$: $\tilde{S}_{r}=\sum_{l\in[L]}\Omega_{l}\Omega'_{l}$ and $\tilde{S}_{c}=\sum_{l\in[L]}\Omega'_{l}\Omega_{l}$. The following lemma is the starting point of our method.
\begin{lem}\label{EigOsum2}
Given that the rank of $\sum_{l\in[L]}B_{l}B'_{l}$ and $\sum_{l\in[L]}B'_{l}B_{l}$ is $K$, the following conclusions can be drawn:
\begin{itemize}
  \item Let $U_{r}$ represent the leading $K$ eigenvectors of $\tilde{S}_{r}$ such that $U'_{r}U_{r} = I_{K\times K}$. Then, it holds that $U_{r} = \Pi_{r}U_{r}(\mathcal{I}_{r},:)$.
  \item Similarly, let $U_{c}$ represent the leading $K$ eigenvectors of $\tilde{S}_{c}$ satisfying $U'_{c}U_{c} = I_{K\times K}$. It follows that $U_{c} = \Pi_{c}U_{c}(\mathcal{I}_{c},:)$.
\end{itemize}
\end{lem}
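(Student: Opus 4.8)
\emph{Plan and structural preliminaries.} The plan is to show that the $K$ columns of $U_r$ span exactly $\mathrm{colspace}(\Pi_r)$ and then recover the stated identity by restricting to the pure-node rows; the argument for $U_c$ is obtained by interchanging the roles of $r$ and $c$ (using $\Omega_l'\Omega_l$ in place of $\Omega_l\Omega_l'$). By assumption (\ref{PureNodes}) together with the reordering $\Pi_r(\mathcal I_r,:)=I_{K\times K}$ and $\Pi_c(\mathcal I_c,:)=I_{K\times K}$, each of $\Pi_r$ and $\Pi_c$ contains a $K\times K$ identity submatrix, hence has full column rank $K$; in particular $\Pi_c'\Pi_c$ is symmetric positive definite. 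Substituting $\Omega_l=\rho\Pi_rB_l\Pi_c'$ into the definition of $\tilde S_r$ gives
\[
\tilde S_r=\sum_{l\in[L]}\Omega_l\Omega_l'=\rho^2\,\Pi_r\left(\sum_{l\in[L]}B_l\Pi_c'\Pi_cB_l'\right)\Pi_r'=:\rho^2\,\Pi_rM_r\Pi_r',
\]
which is symmetric positive semidefinite; equivalently one may note $\Omega_l=\Pi_r\Omega_l(\mathcal I_r,:)$, already giving $\mathrm{colspace}(\tilde S_r)\subseteq\mathrm{colspace}(\Pi_r)$.

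\emph{Rank and eigenspace.} Using that $\Pi_c'\Pi_c$ is positive definite, a short computation gives $M_r\succeq 0$ and $\mathrm{rank}(M_r)=\mathrm{rank}([B_1\,|\,\cdots\,|\,B_L])=\mathrm{rank}\big(\sum_{l\in[L]}B_lB_l'\big)=K$ by hypothesis (the factor $\Pi_c'\Pi_c=RR'$ with $R$ invertible can be absorbed into the $B_l$'s without changing the rank). Because $\Pi_r$ has full column rank, $\tilde S_r=\rho^2\Pi_rM_r\Pi_r'$ also has rank $K$ and remains positive semidefinite, so it has exactly $K$ nonzero eigenvalues, all strictly positive, and these coincide with its $K$ eigenvalues largest in absolute value. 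Hence any orthonormal matrix $U_r$ of leading $K$ eigenvectors satisfies $\mathrm{colspace}(U_r)=\mathrm{colspace}(\tilde S_r)$; combining this with $\mathrm{colspace}(\tilde S_r)\subseteq\mathrm{colspace}(\Pi_r)$ and the fact that both subspaces have dimension $K$ yields $\mathrm{colspace}(U_r)=\mathrm{colspace}(\Pi_r)$.

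\emph{Conclusion.} Since $\mathrm{colspace}(U_r)=\mathrm{colspace}(\Pi_r)$ and $\Pi_r$ has full column rank, there is a unique $X\in\mathbb R^{K\times K}$ with $U_r=\Pi_rX$. Restricting to the rows indexed by $\mathcal I_r$ and using $\Pi_r(\mathcal I_r,:)=I_{K\times K}$ gives $U_r(\mathcal I_r,:)=\Pi_r(\mathcal I_r,:)X=X$, hence $U_r=\Pi_rU_r(\mathcal I_r,:)$. Repeating the argument with $\Omega_l'\Omega_l=\rho^2\Pi_c(B_l'\Pi_r'\Pi_rB_l)\Pi_c'$, the rank-$K$ hypothesis on $\sum_{l\in[L]}B_l'B_l$, and $\Pi_c(\mathcal I_c,:)=I_{K\times K}$ yields $U_c=\Pi_cU_c(\mathcal I_c,:)$.

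The one genuinely delicate point I anticipate is the rank bookkeeping: the matrix sandwiched between $\Pi_r$ and $\Pi_r'$ in $\tilde S_r$ is $\sum_{l\in[L]}B_l\Pi_c'\Pi_cB_l'$ rather than $\sum_{l\in[L]}B_lB_l'$, so one must use positive definiteness of $\Pi_c'\Pi_c$ (a consequence of the pure-node assumption) to transfer the stated rank-$K$ hypothesis to $\tilde S_r$ and to be sure that $U_r$ picks up no spurious zero-eigenvalue directions. The remaining steps are routine manipulations of column spaces and the pure-node normalization.
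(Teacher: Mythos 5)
Your proof is correct and follows essentially the same route as the paper's: factor $\tilde S_r=\rho^{2}\Pi_r\bigl(\sum_{l\in[L]}B_l\Pi_c'\Pi_cB_l'\bigr)\Pi_r'$, use the rank-$K$ hypothesis to conclude $U_r=\Pi_rX$ for some $K\times K$ matrix $X$, and identify $X=U_r(\mathcal I_r,:)$ via the pure-node normalization $\Pi_r(\mathcal I_r,:)=I_{K\times K}$. The only difference is presentational: the paper constructs $X_r$ explicitly from the eigendecomposition identity $U_r=\tilde S_rU_r\Lambda_r^{-1}$, whereas you argue via equality of column spaces and, usefully, make explicit the positive-definiteness of $\Pi_c'\Pi_c$ needed to transfer the rank hypothesis, a step the paper states without detail.
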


The form of $U_{r}=\Pi_{r}U_{r}(\mathcal{I}_{r},:)$ is known as ideal simplex, where the rows of $U_{r}$ locate in a simplex in $\mathbb{R}^{K}$ with vertices being the $K$ rows of $U_{r}(\mathcal{I}_{r},:)$. Such ideal simplex has been observed in single layer mixed membership community detection \citep{panov2018consistent,mao2021estimating,qing2023regularized,jin2024mixed,qing2024bipartite}, topic modeling \citep{klopp2023assigning,ke2024using}, and latent class analysis \citep{qing2023finding,chen2024spectral}. Since $U_{r}(\mathcal{I}_{r},:)$ is a $K\times K$ full rank matrix, we have $\Pi_{r}=U_{r}U^{-1}_{r}(\mathcal{I}_{r},:)$, which indicates that we can exactly recover the row membership matrix through $U_{r}U^{-1}_{r}(\mathcal{I}_{r},:)$ as long as the index set $\mathcal{I}_{r}$ is available. Benefited from the simplex structure $U_{r}=\Pi_{r}U_{r}(\mathcal{I}_{r},:)$, applying the \emph{Successive Projection Algorithm} (SPA) \citep{araujo2001successive,gillis2013fast,gillis2015semidefinite} to all rows of $U_{r}$ with $K$ clusters can exactly hunt for the $K$ vertices (i.e., the $K$ rows of $U_{r}(\mathcal{I}_{r},:)$) in the simplex. The details of SPA are provided in Algorithm \ref{alg:SP}. Upon execution on $U_{r}$, SPA accepts $U_{r}$ and $K$ as inputs (with $M=U_{r}$ and $r=K$ in Algorithm \ref{alg:SP}). SPA iteratively chooses $K$ rows from the eigenvector matrix $U_{r}$ that possess the highest Frobenius norms. Each selected row prompts SPA to eliminate its contribution within the matrix, thereby ensuring the orthogonality of subsequent selections to previously chosen rows. Consequently, a collection of $K$ pure nodes $\mathcal{I}$, is obtained. For deeper insights into SPA, please refer to \citep{mao2021estimating,jin2024mixed,chen2024spectral}.

\begin{algorithm}
\caption{SPA}
\label{alg:SP}
\begin{algorithmic}[1]
\Require Matrix $M\in\mathbb{R}^{n\times K}$ and integer $r\leq n$.
\Ensure Set of indices $\mathcal{I}\subseteq[n]$.
\State Initialize: $\mathcal{I}=\{\}$.
\State For $k\in[r]$ do:
\State ~~~~~~~$k_{*}=\mathrm{argmax}_{i\in[n]}\|M(i,:)\|_{F}$.
\State ~~~~~~$M\leftarrow M-M\frac{M'(k_{*},:)M(k_{*},:)}{\|M(k_{*},:)\|^{2}_{F}}$.
\State ~~~~~~$\mathcal{I}=\mathcal{I}\cup \{k_{*}\}$.
\State end for
\end{algorithmic}
\end{algorithm}

Similarly, $U_{c}=\Pi_{c}U_{c}(\mathcal{I}_{c},:)$ also forms an ideal simplex structure and applying the SPA to $U_{c}$ can exactly recover $\mathcal{I}_{c}$. The following four-stage algorithm called ideal \textbf{C}o-clustering by \textbf{S}equential \textbf{P}rojection on the \textbf{D}ebiased \textbf{S}um \textbf{o}f \textbf{S}quared matrices (ideal CSPDSoS) summarizes the above analysis. Input: $\{\Omega_{l}\}^{L}_{l=1}$ and $K$. Output: $\Pi_{r}$ and $\Pi_{c}$.
\begin{itemize}
  \item \emph{Aggregation step}. Set $\tilde{S}_{r}=\sum_{l\in[L]}\Omega_{l}\Omega'_{l}$ and $\tilde{S}_{c}=\sum_{l\in[L]}\Omega'_{l}\Omega_{l}$.
  \item \emph{Eigendecomposition step}. Obtain $U_{r}$ and $U_{c}$, the leading $K$ eigenvectors of $\tilde{S}_{r}$ and $\tilde{S}_{c}$, respectively.
  \item \emph{Vertex hunting step}. Run SPA to $U_{r}$'s (and $U_{c}$'s) rows with $K$ clusters to get $\mathcal{I}_{r}$ (and $\mathcal{I}_{c}$).
  \item \emph{Membership reconstruction step}. Set $\Pi_{r}=U_{r}U^{-1}_{r}(\mathcal{I}_{r},:)$ and $\Pi_{c}=U_{c}U^{-1}_{c}(\mathcal{I}_{c},:)$.
\end{itemize}
\subsection{Real case}\label{sec3real}
In practice, we observe the $L$ adjacency matrices $\{A_{l}\}^{L}_{l=1}$ instead of their expectations $\{\Omega_{l}\}^{L}_{l=1}$. Let $D^{r}_{l}$ and $D^{c}_{l}$ be two $n\times n$ diagonal matrices such that $D^{r}_{l}(i,i)=\sum_{j\in[n]}A_{l}(i,j)$ and $D^{c}_{l}(i,i)=\sum_{j\in[n]}A_{l}(j,i)$ for $i\in[n], l\in[L]$. Define $S_{r}$ and $S_{c}$ as $S_{r}=\sum_{l\in[L]}(A_{l}A'_{l}-D^{r}_{l})$ and $S_{c}=\sum_{l\in[L]}(A'_{l}A_{l}-D^{c}_{l})$. Drawing from the insightful analysis conducted in \citep{lei2023bias,su2024spectral}, we know that $S_{r}$ and $S_{c}$ are debiased estimations of $\tilde{S}_{r}$ and $\tilde{S}_{c}$, respectively. Let $\hat{U}_{r}$ and $\hat{U}_{c}$ be the leading $K$ eigenvectors of $S_{r}$ and $S_{c}$, respectively. We see that $\hat{U}_{r}$ and $\hat{U}_{c}$ can be viewed as good approximations of $U_{r}$ and $U_{c}$, respectively. Then applying SPA to all rows of $\hat{U}_{r}$ (and analogously to $\hat{U}_{c}$) with $K$ clusters should obtain accurate estimations of the index sets $\mathcal{I}_{r}$ (and $\mathcal{I}_{c}$). Algorithm \ref{alg:CSPDSoS} summarizes the above analysis.
\begin{algorithm}
\caption{\textbf{C}o-clustering by \textbf{S}equential \textbf{P}rojection on the \textbf{D}ebiased \textbf{S}um \textbf{o}f \textbf{S}quared matrices (CSPDSoS)}
\label{alg:CSPDSoS}
\begin{algorithmic}[1]
\Require $A_{l}\in\{0,1\}^{n\times n}$ for $l\in[L]$ and $K$.
\Ensure $\hat{\Pi}_{r}$ and $\hat{\Pi}_{c}$.
\State Set $S_{r}=\sum_{l\in[L]}(A_{l}A'_{l}-D^{r}_{l})$ and $S_{c}=\sum_{l\in[L]}(A'_{l}A_{l}-D^{c}_{l})$.
\State Obtain $\hat{U}_{r}$ and $\hat{U}_{c}$, the leading $K$ eigenvectors of $S_{r}$ and $S_{c}$, respectively.
\State Run SPA to $\hat{U}_{r}$'s (and $\hat{U}_{c}$'s) rows with $K$ clusters to get the estimated index set $\hat{\mathcal{I}_{r}}$ (and $\hat{\mathcal{I}_{c}}$).
\State Set $\hat{Y}_{r}=\mathrm{max}(0,\hat{U}_{r}\hat{U}^{-1}_{r}(\hat{\mathcal{I}_{r}},:))$ and $\hat{Y}_{c}=\mathrm{max}(0,\hat{U}_{c}\hat{U}^{-1}_{c}(\hat{\mathcal{I}_{c}},:))$.
\State Set $\hat{\Pi}_{r}(i,:)=\frac{\hat{Y}_{r}(i,:)}{\|\hat{Y}_{r}(i,:)\|_{1}}$ and $\hat{\Pi}_{c}(i,:)=\frac{\hat{Y}_{c}(i,:)}{\|\hat{Y}_{c}(i,:)\|_{1}}$ for $i\in[n]$.
\end{algorithmic}
\end{algorithm}

The computational cost of CSPDSoS is primarily attributed to steps 1, 2, and 3. Specifically, the time complexities of these steps are $O(n^{3}L)$, $O(n^{3})$, and $O(nK^{2})$, respectively. Given that $K\ll n$ in this article, the overall complexity of CSPDSoS is dominated by the $O(n^{3}L)$ term, rendering its total complexity as $O(n^{3}L)$.
\begin{rem}
Our CSPDSoS method, primarily designed to estimate asymmetric mixed memberships of nodes in multi-layer directed networks, can be effortlessly adapted for application in multi-layer bipartite networks. For example, consider a multi-layer bipartite network comprising $L$ layers, where $A^{bi}_{l}\in\{0,1\}^{n_{r}\times n_{c}}$ for $l\in[L]$, with $n_{r}$ and $n_{c}$ representing the number of row and column nodes, respectively \citep{qing2023community,qing2024bipartite}. Here, $A^{bi}_{l}$ denotes the bipartite adjacency matrix for the $l$-th layer. By merely substituting $A_{l}$ in Algorithm \ref{alg:CSPDSoS} with $A^{bi}_{l}$, we can estimate mixed memberships for both row and column nodes in multi-layer bipartite networks.
\end{rem}
\section{Main results}\label{sec5}
In this section, we establish per-node error bounds for the CSPDSoS algorithm's performance, demonstrating that the estimated membership matrices $\hat{\Pi}_{r}$ and $\hat{\Pi}_{c}$ converge closely to $\Pi_{r}$ and $\Pi_{c}$, respectively. To achieve this, we introduce the following assumption to govern $\mathcal{N}$'s sparsity.
\begin{assum}\label{Assum2}
$\rho^{2}n^{2}L\gg\mathrm{log}(n+L)$.
\end{assum}
Assumption \ref{Assum2} says that the sparsity parameter $\rho$ should diminish at a rate slower than $\sqrt{\frac{\mathrm{log}(n+L)}{n^{2}L}}$, aligning with the sparsity condition specified in Theorem 1 of \citep{lei2023bias}. Similar to Assumption 1.(b) in \citep{lei2023bias} and Assumption 2 in \citep{su2024spectral}, we also assume that the smallest singular values of $\sum_{l\in[L]}B_{l}B'_{l}$ and $\sum_{l\in[L]}B'_{l}B_{l}$ satisfy a linear growth respective to the layers $L$.
\begin{assum}\label{Assum22}
The smallest singular values of $\sum_{l\in[L]}B_{l}B'_{l}$ and $\sum_{l\in[L]}B'_{l}B_{l}$ are at least $c_{1}L$ and $c_{2}L$ for some constants $c_{1}>0$ and $c_{2}>0$, respectively.
\end{assum}
Analogous to Assumption 1(a) presented in \citep{lei2023bias} and the conditions outlined in Corollary 3.1 of \citep{mao2021estimating}, the following condition serves to facilitate our analysis.
\begin{con}\label{condition}
$K=O(1), \lambda_{K}(\Pi'_{r}\Pi_{r})=O(\frac{n}{K})$, and $\lambda_{K}(\Pi'_{c}\Pi_{c})=O(\frac{n}{K})$.
\end{con}
The theorem presented below serves as our key theoretical contribution, offering insights into the per-node error rates of the estimated mixed memberships delivered by our CSPDSoS algorithm.
\begin{thm}\label{MAIN}
If Assumptions \ref{Assum2}, \ref{Assum22}, and Condition \ref{condition} are satisfied, with probability $1-o(\frac{1}{n+L})$,
\begin{align*}
\mathrm{max}_{i\in[n]}\|e'_{i}(\hat{\Pi}_{r}-\Pi_{r}\mathcal{P}_{r})\|_{1}=O(\sqrt{\frac{\mathrm{log}(n+L)}{\rho^{2}n^{2}L}})+O(\frac{1}{n}), \mathrm{max}_{i\in[n]}\|e'_{i}(\hat{\Pi}_{c}-\Pi_{c}\mathcal{P}_{c})\|_{1}=O(\sqrt{\frac{\mathrm{log}(n+L)}{\rho^{2}n^{2}L}})+O(\frac{1}{n}),
\end{align*}
where $\mathcal{P}_{r}$ and $\mathcal{P}_{c}$ are two $K\times K$ permutation matrices.
\end{thm}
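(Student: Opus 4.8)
The plan is to run the standard pipeline for spectral mixed-membership estimation --- concentration of the debiased aggregation matrices, a spectral-gap estimate, a row-wise ($\|\cdot\|_{2\rightarrow\infty}$) eigenvector perturbation bound, robustness of SPA, and propagation through the membership-reconstruction map --- while tracking the dependence on $\rho$, $n$, $L$ throughout. It suffices to treat $\hat\Pi_{r}$; the case of $\hat\Pi_{c}$ is identical after transposing. First I would split $S_{r}-\tilde S_{r}=(S_{r}-\mathbb{E}[S_{r}])+(\mathbb{E}[S_{r}]-\tilde S_{r})$. Since $\mathbb{E}[A_{l}A'_{l}-D^{r}_{l}]$ equals $\Omega_{l}\Omega'_{l}$ off the diagonal and is zero on the diagonal, the deterministic part is exactly $-\,\mathrm{Diag}(\tilde S_{r})$, whose spectral and $2\rightarrow\infty$ norms are at most $\mathrm{max}_{i}(\tilde S_{r})_{ii}\le\rho^{2} nL$. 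For the stochastic part I would apply the matrix Bernstein inequality of \cite{tropp2012user} to $\sum_{l}(A_{l}A'_{l}-D^{r}_{l}-\mathbb{E}[A_{l}A'_{l}-D^{r}_{l}])$, handling the fact that $A_{l}A'_{l}$ is quadratic in $A_{l}$ as in \cite{lei2023bias,su2024spectral}, to get $\|S_{r}-\mathbb{E}[S_{r}]\|_{2}=O(\rho n\sqrt{L\mathrm{log}(n+L)})$ with probability $1-o(1/(n+L))$.

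Next I would bound the eigengap. Writing $\tilde S_{r}=\rho^{2}\Pi_{r}(\sum_{l\in[L]}B_{l}\Pi'_{c}\Pi_{c}B'_{l})\Pi'_{r}$ and using $\Pi'_{c}\Pi_{c}\succeq\lambda_{K}(\Pi'_{c}\Pi_{c})I_{K\times K}$ together with Assumption \ref{Assum22} and Condition \ref{condition} (under which $\lambda_{K}(\Pi'_{r}\Pi_{r})$ and $\lambda_{K}(\Pi'_{c}\Pi_{c})$ are of order $n$), one obtains $\lambda_{K}(\tilde S_{r})\gtrsim\rho^{2} n^{2}L$; by Lemma \ref{EigOsum2} this is also the gap to the zero eigenvalues. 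I would also record that $U_{r}$ is incoherent: from $U_{r}=\Pi_{r}U_{r}(\mathcal{I}_{r},:)$ and $U'_{r}(\mathcal{I}_{r},:)\Pi'_{r}\Pi_{r}U_{r}(\mathcal{I}_{r},:)=I_{K\times K}$ one gets $\|U_{r}\|_{2\rightarrow\infty}\le\|U_{r}(\mathcal{I}_{r},:)\|_{2\rightarrow\infty}=O(\sqrt{K/n})$, $\|U^{-1}_{r}(\mathcal{I}_{r},:)\|_{2}=\sqrt{\lambda_{1}(\Pi'_{r}\Pi_{r})}=O(\sqrt n)$, and $\kappa(U_{r}(\mathcal{I}_{r},:))=O(1)$. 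Feeding the concentration bound, the gap, and the incoherence bound into Theorem~4.2 of \cite{cape2019the} then yields an orthogonal $O_{r}$ with
\[
\|\hat U_{r}-U_{r}O_{r}\|_{2\rightarrow\infty}=O\!\left(\sqrt{\frac{\mathrm{log}(n+L)}{\rho^{2} n^{3} L}}\right)+O(n^{-3/2}),
\]
the first term coming from the stochastic perturbation and the second from the diagonal bias (its contribution being of order $\|\mathrm{Diag}(\tilde S_{r})U_{r}\|_{2\rightarrow\infty}/\lambda_{K}(\tilde S_{r})$); Assumption \ref{Assum2} makes this bound $o(n^{-1/2})$.

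With the row-wise bound in hand, I would invoke the robustness analysis of SPA (\cite{gillis2013fast}, in the form used in \cite{mao2021estimating}): since $U_{r}(\mathcal{I}_{r},:)$ is well-conditioned and the perturbation is much smaller than $\sigma_{\mathrm{min}}(U_{r}(\mathcal{I}_{r},:))\asymp n^{-1/2}$, SPA on the rows of $\hat U_{r}$ returns $\hat{\mathcal{I}_{r}}$ and a permutation matrix $\mathcal{P}_{r}$ with $\|\hat U_{r}(\hat{\mathcal{I}_{r}},:)O'_{r}-\mathcal{P}^{-1}_{r}U_{r}(\mathcal{I}_{r},:)\|_{2}=O(\|\hat U_{r}-U_{r}O_{r}\|_{2\rightarrow\infty})$. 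Then I would expand the reconstruction map: with $\hat U_{r}(i,:)=U_{r}(i,:)O_{r}+\xi_{i}$, $\|\xi_{i}\|_{2}\le\|\hat U_{r}-U_{r}O_{r}\|_{2\rightarrow\infty}$, and the perturbation-of-inverse estimate $\|\hat U^{-1}_{r}(\hat{\mathcal{I}_{r}},:)-O'_{r}U^{-1}_{r}(\mathcal{I}_{r},:)\mathcal{P}_{r}\|_{2}=O(\|U^{-1}_{r}(\mathcal{I}_{r},:)\|^{2}_{2}\|\hat U_{r}-U_{r}O_{r}\|_{2\rightarrow\infty})=O(n\|\hat U_{r}-U_{r}O_{r}\|_{2\rightarrow\infty})$, one finds that row $i$ of $\hat U_{r}\hat U^{-1}_{r}(\hat{\mathcal{I}_{r}},:)-\Pi_{r}\mathcal{P}_{r}$ has $l_{2}$ norm $O(\sqrt n\,\|\hat U_{r}-U_{r}O_{r}\|_{2\rightarrow\infty})=O(\sqrt{\mathrm{log}(n+L)/(\rho^{2} n^{2} L)})+O(1/n)$, the two dominant products being $\|U_{r}(i,:)\|_{2}\cdot n\|\hat U_{r}-U_{r}O_{r}\|_{2\rightarrow\infty}$ and $\|\xi_{i}\|_{2}\cdot\|U^{-1}_{r}(\mathcal{I}_{r},:)\|_{2}$. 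Finally, because $K=O(1)$ the passage to $l_{1}$ costs only a constant, the truncation $\mathrm{max}(0,\cdot)$ cannot increase the distance to the nonnegative target $\Pi_{r}(i,:)\mathcal{P}_{r}$, and row-normalization is Lipschitz since $\|\hat Y_{r}(i,:)\|_{1}=1+o(1)$; hence $\mathrm{max}_{i\in[n]}\|e'_{i}(\hat\Pi_{r}-\Pi_{r}\mathcal{P}_{r})\|_{1}$ inherits the stated rate, and symmetrically for $\hat\Pi_{c}$.

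The step I expect to be the main obstacle is the row-wise eigenvector analysis: verifying the hypotheses of Theorem~4.2 of \cite{cape2019the} for the debiased matrix $S_{r}$ --- in particular controlling the mixed quantity $\|(S_{r}-\tilde S_{r})U_{r}\|_{2\rightarrow\infty}$ rather than merely $\|S_{r}-\tilde S_{r}\|_{2}$ --- and keeping the deterministic diagonal bias isolated so that it surfaces only as the lower-order $O(1/n)$ term instead of contaminating the leading rate. A secondary burden is the matrix Bernstein step itself, where the quadratic dependence of $A_{l}A'_{l}$ on the Bernoulli entries forces a careful variance computation to reach the $\rho n\sqrt{L\mathrm{log}(n+L)}$ bound; once these two estimates are in place, the spectral-gap, SPA-robustness, and reconstruction steps are routine.
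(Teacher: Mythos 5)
Your proposal is correct and follows essentially the same pipeline as the paper: Bernstein-type concentration of the debiased aggregation matrices, the spectral-gap bound $|\lambda_{K}(\tilde S_{r})|\gtrsim\rho^{2}n^{2}L$, the row-wise perturbation bound from Theorem~4.2 of Cape et al., and the SPA-robustness/reconstruction analysis of Mao et al., with the diagonal bias surfacing as the $O(1/n)$ term exactly as in the paper. The only notable difference is at the concentration step: the paper bounds $\|S_{r}-\tilde S_{r}\|_{\infty}$ directly via a scalar Bernstein inequality applied to $\sum_{j\neq i}y_{(ij)}x(j)$ with $x\in\{-1,1\}^{n-1}$ (which is the quantity Theorem~4.2 of Cape et al. actually consumes), rather than your matrix-Bernstein bound on $\|S_{r}-\mathbb{E}[S_{r}]\|_{2}$; both give the same order $O(\rho n\sqrt{L\,\mathrm{log}(n+L)})$, so the step you flag as the main obstacle is resolved in the paper by working with the $\infty$-norm from the start.
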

According to Theorem \ref{MAIN}, enhancing $\rho$ or increasing $n$ leads to more precise estimations of $\Pi_{r}$ and $\Pi_{c}$. Additionally, an augmentation in the number of layers $L$ results in a reduction in error rates, indicating that the utilization of multiple layers can significantly benefit the process of community detection.
\section{Numerical experiments}\label{sec6experiments}
\subsection{Synthetic data}
We first investigate CSPDSoS's performance on synthetic multi-layer directed networks by considering different values of the parameters $n, \rho$, and $L$.

\emph{Algorithms for comparison}. While \citep{su2024spectral} proposed an efficient method for community detection in multi-layer ScBM for non-overlapping multi-layer directed networks, it is unable to estimate nodes' mixed memberships in overlapping multi-layer directed networks. Meanwhile, as evidenced by Table \ref{LiteratureReview}, except for our CSPDSoS algorithm, as far as we know, there is no method specifically designed for estimating mixed memberships in overlapping multi-layer directed networks. Therefore, similar to \citep{lei2023bias, su2024spectral,qing2024bipartite}, we consider the following approaches for comparison:
\begin{itemize}
\item \texttt{CSPSum}: \textbf{C}o-clustering by \textbf{S}equential \textbf{P}rojection on the \textbf{Sum} of adjacency matrices. This method takes the leading $K$ left (and right) singular
vectors of $\sum_{l\in[L]}A_{l}$ to replace $\hat{U}_{r}$ (and $\hat{U}_{c}$) in Algorithm \ref{alg:CSPDSoS}.
\item \texttt{CSPSoS}: \textbf{C}o-clustering by \textbf{S}equential \textbf{P}rojection on the \textbf{S}um \textbf{o}f \textbf{S}quared matrices. CSPSoS takes $\sum_{l\in[L]}A_{l}A'_{l}$ and $\sum_{l\in[L]}A'_{l}A_{l}$ to replace $S_{r}$ and $S_{c}$ in Algorithm \ref{alg:CSPDSoS}, respectively.
\item \texttt{GeoNMF}, \texttt{SVM-cone-DCMMSB} (\texttt{SVM-cD} for short), and \texttt{Mixed-SCORE}: These three algorithms, initially presented in \citep{mao2017mixed}, \citep{maoSVM}, and \citep{jin2024mixed} respectively, are tailored for estimating nodes' mixed memberships in single-layer undirected networks. To adapt these algorithms for detecting overlapping communities in multi-layer directed networks, certain modifications are necessary. Inspired by our CSPDSoS algorithm and the DSoG algorithm in \citep{su2024spectral}, we use the aggregation matrices $S_{r}$ and $S_{c}$ to substitute for their adjacency matrix to estimate the row membership matrix $\Pi_{r}$ and the column membership matrix $\Pi_{c}$, respectively.
\end{itemize}
\begin{rem}\label{algorithmModify}
We should emphasize that the MASE \cite{arroyo2021inference}, SoS-Debias \citep{lei2023bias}, CAMSBM \citep{xu2023covariate}, and DSoG \citep{su2024spectral} algorithms are tailored for detecting nodes' non-overlapping communities in multi-layer networks but can't be adapted for estimating mixed memberships in such networks. Thus, we exclude them and their variants from our comparison. To our knowledge, the three approaches CSPDSoS, CSPSum, and CSPSoS developed in this paper are the only algorithms specifically designed for estimating mixed memberships in multi-layer directed networks. Given that GeoNMF, SVM-cD, and Mixed-SCORE are initially designed for estimating nodes' mixed memberships in single-layer undirected networks, we've adapted them for multi-layer directed networks to provide a more comprehensive comparison with our CSPDSoS method.
\end{rem}

\emph{Evaluation Metrics}. When the true membership matrices $\Pi_{r}$ and $\Pi_{c}$ are known, we consider the following three metrics:
\begin{itemize}
  \item \texttt{Hamming error:} This metric is defined as $\mathrm{max}\left(\frac{\mathrm{min}_{\mathcal{P}\in \mathcal{G}}\|\hat{\Pi}_{r}\mathcal{P}-\Pi_{r}\|_{1}}{n},\frac{\mathrm{min}_{\mathcal{P}\in \mathcal{G}}\|\hat{\Pi}_{c}\mathcal{P}-\Pi_{c}\|_{1}}{n}\right)$, where $\mathcal{G}$ represents the set of all $K\times K$ permutation matrices. Consistent with the measure used in Theorem \ref{MAIN}, which establishes the theoretical upper bound for per-node error rates through the $l_{1}$ difference between true and estimated mixed memberships for both row and column clusters, the Hamming error ranges from 0 to 1, with lower values indicating better performance.

  \item \texttt{Relative error:} This metric is defined as $\mathrm{max}\left(\frac{\mathrm{min}_{\mathcal{P}\in \mathcal{G}}\|\hat{\Pi}_{r}\mathcal{P}-\Pi_{r}\|_{F}}{\|\Pi_{r}\|_{F}},\frac{\mathrm{min}_{\mathcal{P}\in \mathcal{G}}\|\hat{\Pi}_{c}\mathcal{P}-\Pi_{c}\|_{F}}{\|\Pi_{c}\|_{F}}\right)$. Unlike the Hamming error, which measures the $l_{1}$ difference, the Relative error assesses the relative $l_{2}$ difference between the true and estimated membership matrices for both row and column communities. It is worth noting that both Hamming and Relative errors have been used in \citep{maoSVM,mao2021estimating,qing2023regularized,qing2024bipartite,jin2024mixed,qing2023finding} to evaluate the discrepancy between the true and estimated mixed membership matrices. A deeper comprehension of Hamming error and Relative error can be found in \citep{qing2024bipartite}.

  \item \texttt{Overlapping Normalized Mutual Information (ONMI):} Introduced by \citep{lancichinetti2009detecting}, this metric extends NMI \citep{danon2005comparing,bagrow2008evaluating} to evaluate overlapping community detection. It ranges from 0 to 1, with higher values indicating better performance. Detailed computations of ONMI can be found in \citep{lancichinetti2009detecting, OCCAM,moscato2021survey}, which we omit here. It is important to note that ONMI applies only to binary membership matrices \citep{maoSVM, OCCAM}. However, the membership matrices $\Pi_{r}, \Pi_{c}, \hat{\Pi}_{r}$, and $\hat{\Pi}_{c}$ considered in this study are continuous. Therefore, to utilize ONMI, we convert these matrices into their binary counterparts. For $\Pi_{r}$, we define its binary version as $\Pi^{0}_{r}$, where $\Pi^{0}_{r}(i,k)=1$ if $\Pi_{r}(i,k)\geq\delta$ and 0 otherwise, for $i\in[n]$ and $k\in[K]$. Here, $\delta$ is a threshold value set to $\frac{1}{K}$ in this study. Similarly, we define $\Pi^{0}_{c}, \hat{\Pi}^{0}_{r}$, and $\hat{\Pi}^{0}_{c}$ as the binary versions of $\Pi_{c}, \hat{\Pi}_{r}$, and $\hat{\Pi}_{c}$, respectively. Let $\mathrm{ONMI}(\Pi^{0}_{r}, \hat{\Pi}^{0}_{r})$ represent the ONMI between $\Pi^{0}_{r}$ and $\hat{\Pi}^{0}_{r}$, and $\mathrm{ONMI}(\Pi^{0}_{c}, \hat{\Pi}^{0}_{c})$ represent the ONMI between $\Pi^{0}_{c}$ and $\hat{\Pi}^{0}_{c}$. To simultaneously assess the true and estimated binary membership matrices for both row and column communities, we define the ONMI used in this paper as the minimum of $\mathrm{ONMI}(\Pi^{0}_{r}, \hat{\Pi}^{0}_{r})$ and $\mathrm{ONMI}(\Pi^{0}_{c}, \hat{\Pi}^{0}_{c})$, given that a higher ONMI value indicates better performance.
\end{itemize}

\begin{figure}
\centering
\resizebox{\columnwidth}{!}{
\subfigure[Experiment 1]{\includegraphics[width=0.3\textwidth]{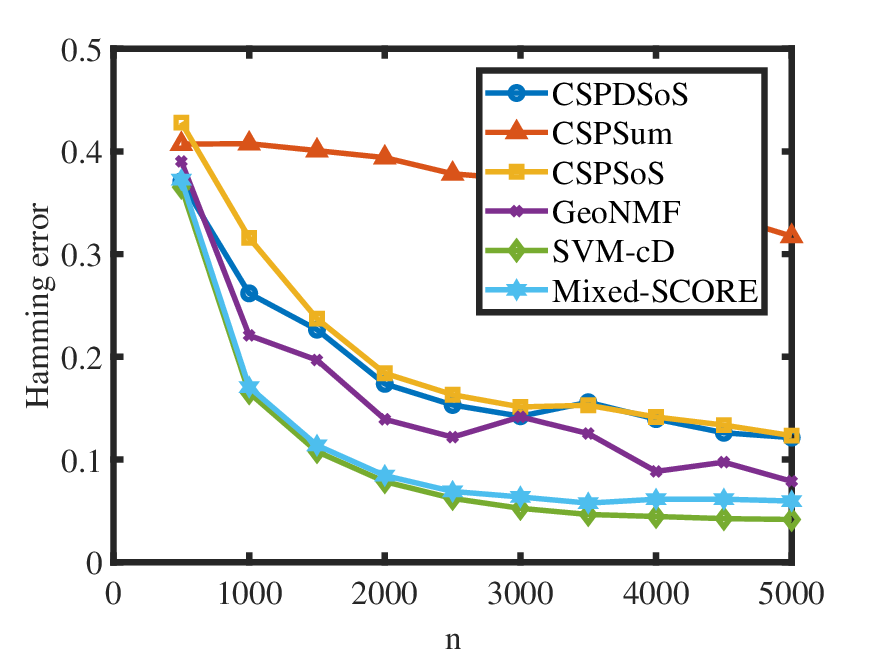}}
\subfigure[Experiment 1]{\includegraphics[width=0.3\textwidth]{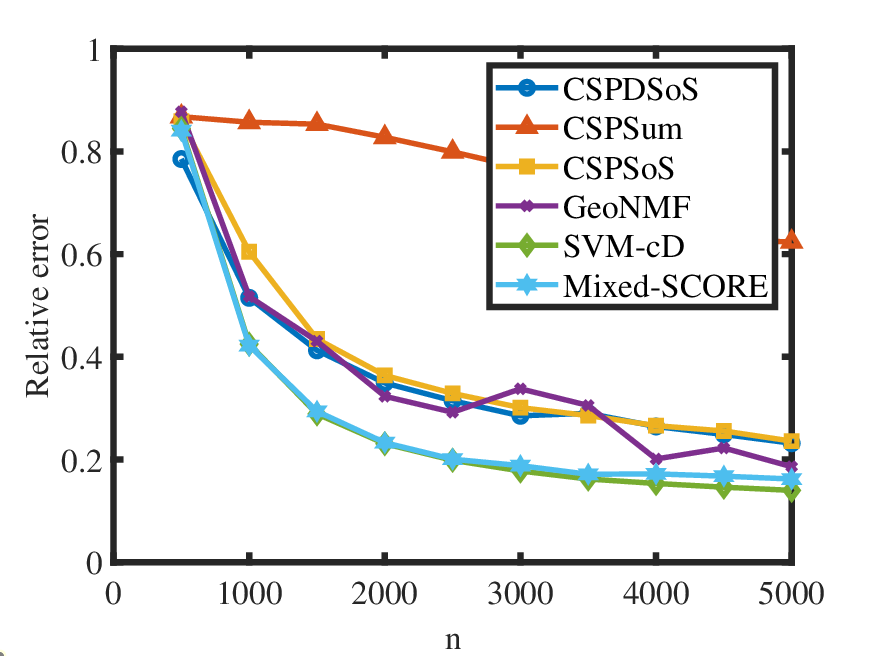}}
\subfigure[Experiment 1]{\includegraphics[width=0.3\textwidth]{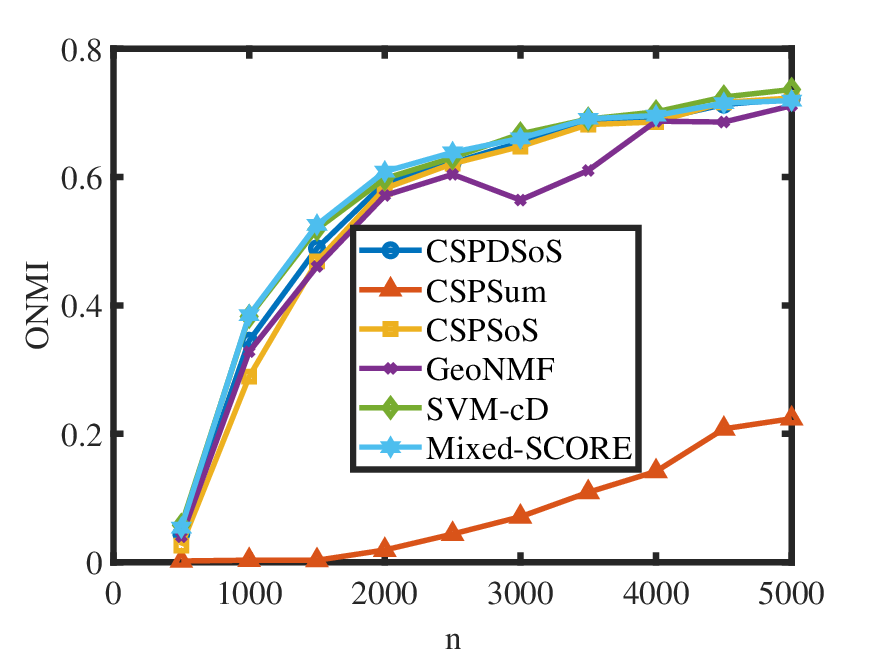}}
\subfigure[Experiment 1]{\includegraphics[width=0.3\textwidth]{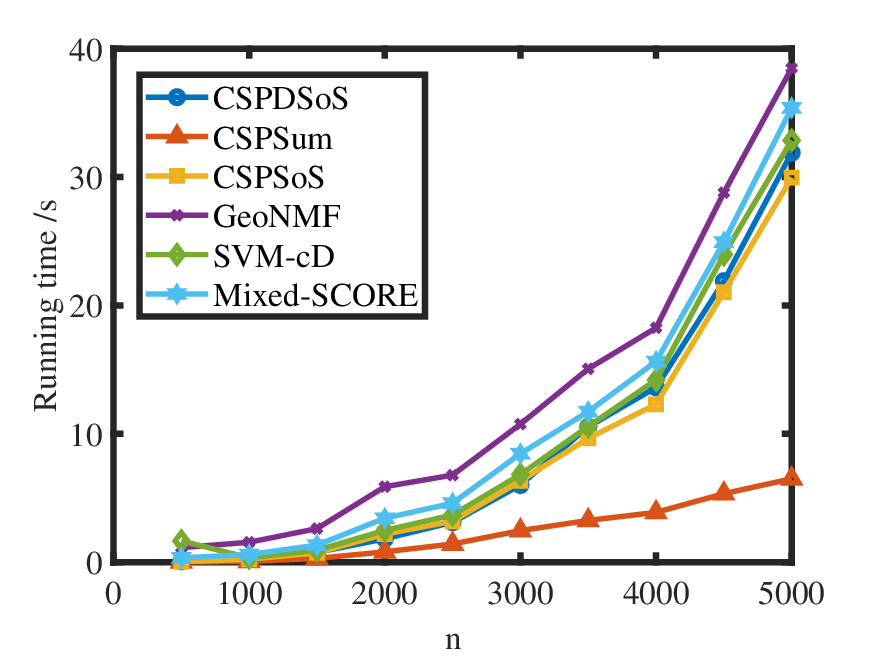}}
}
\resizebox{\columnwidth}{!}{
\subfigure[Experiment 2]{\includegraphics[width=0.3\textwidth]{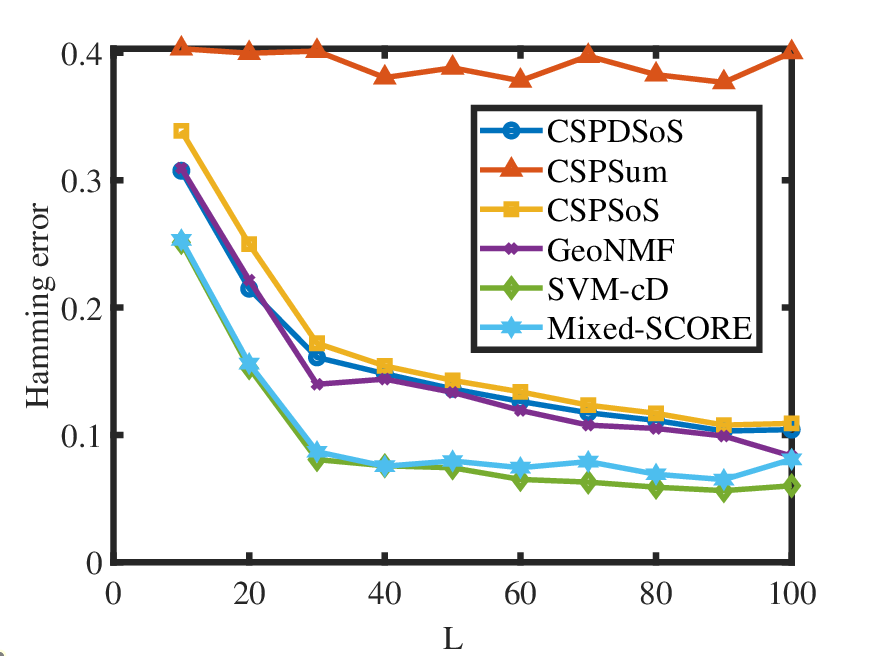}}
\subfigure[Experiment 2]{\includegraphics[width=0.3\textwidth]{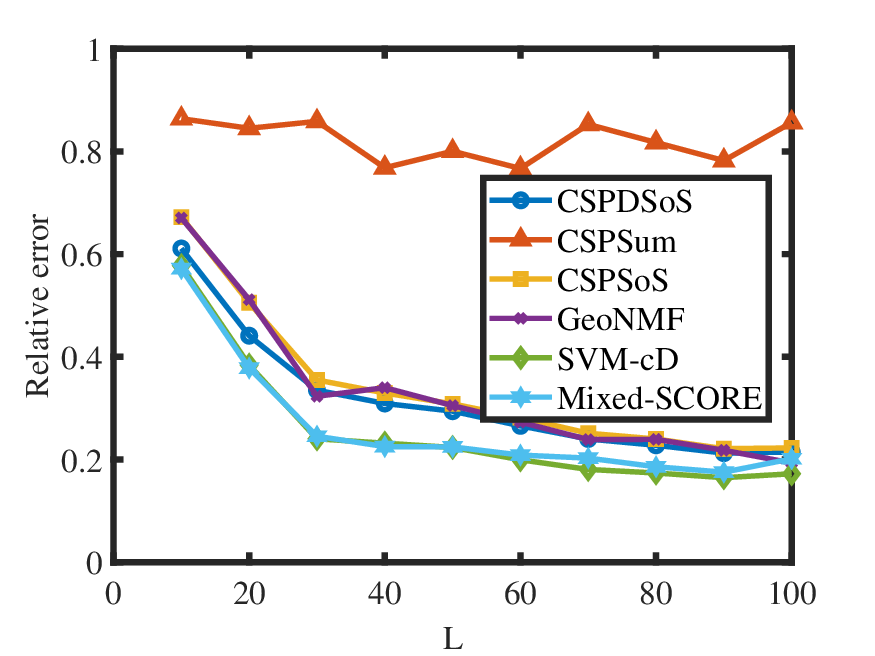}}
\subfigure[Experiment 2]{\includegraphics[width=0.3\textwidth]{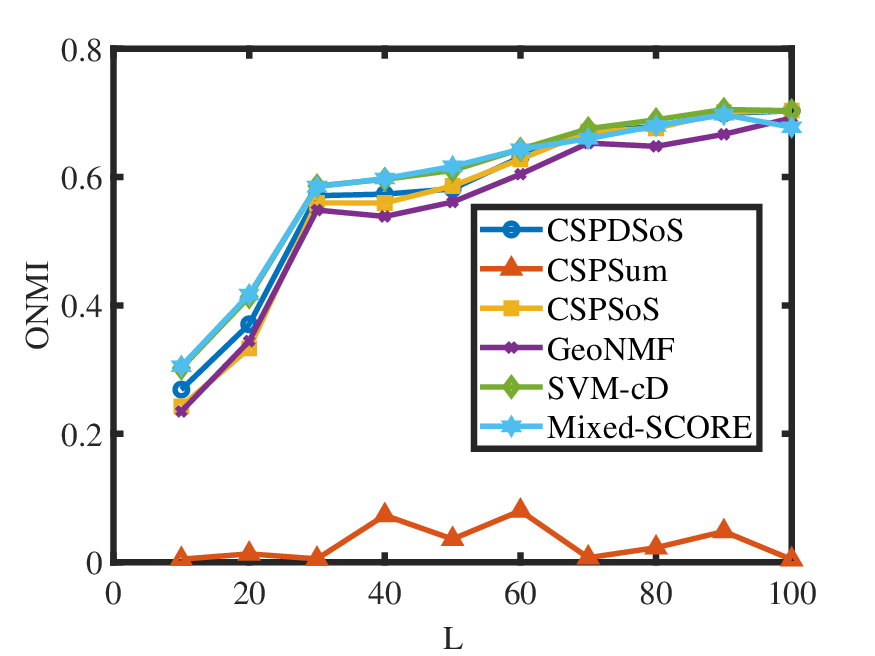}}
\subfigure[Experiment 2]{\includegraphics[width=0.3\textwidth]{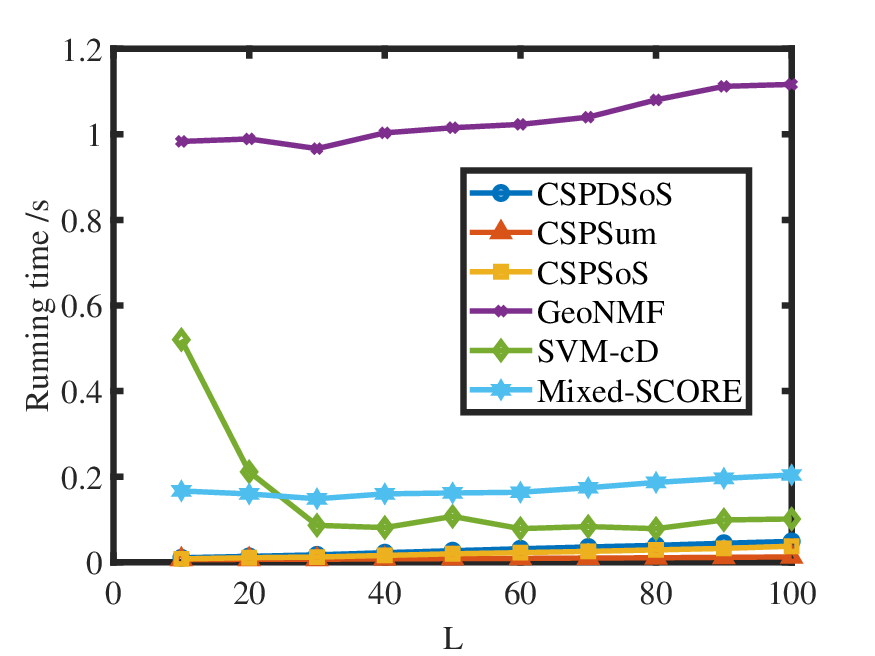}}
}
\resizebox{\columnwidth}{!}{
\subfigure[Experiment 3]{\includegraphics[width=0.3\textwidth]{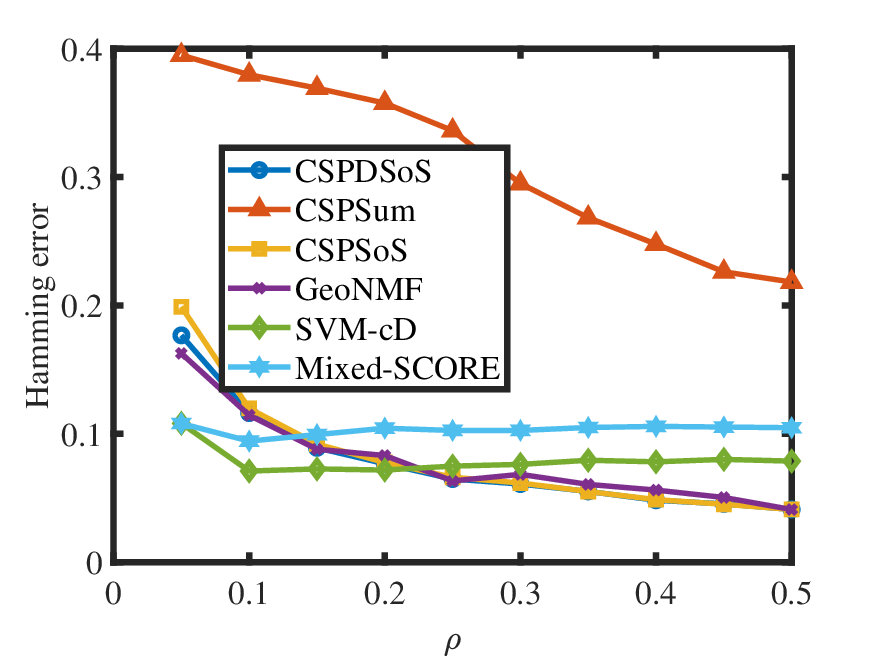}}
\subfigure[Experiment 3]{\includegraphics[width=0.3\textwidth]{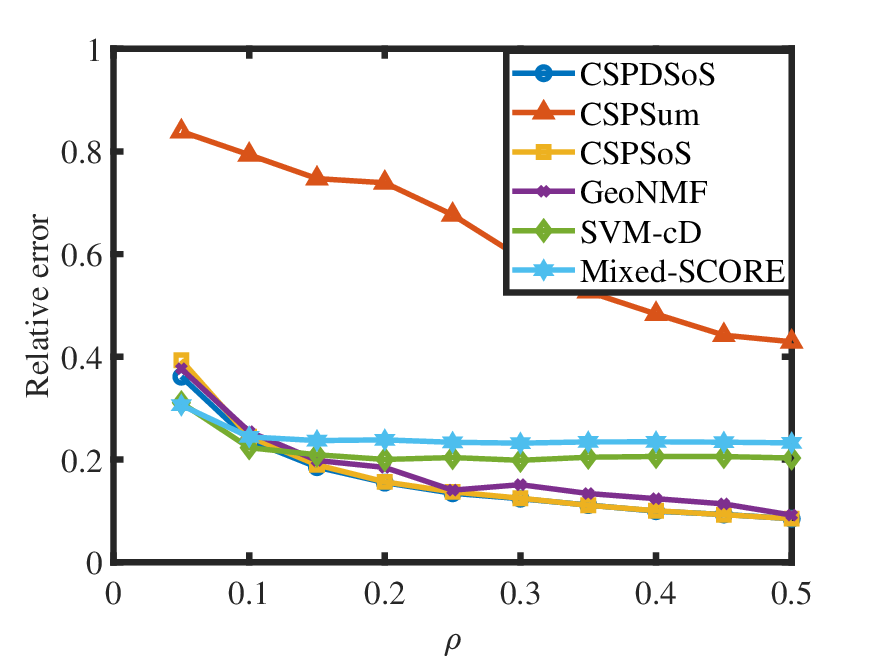}}
\subfigure[Experiment 3]{\includegraphics[width=0.3\textwidth]{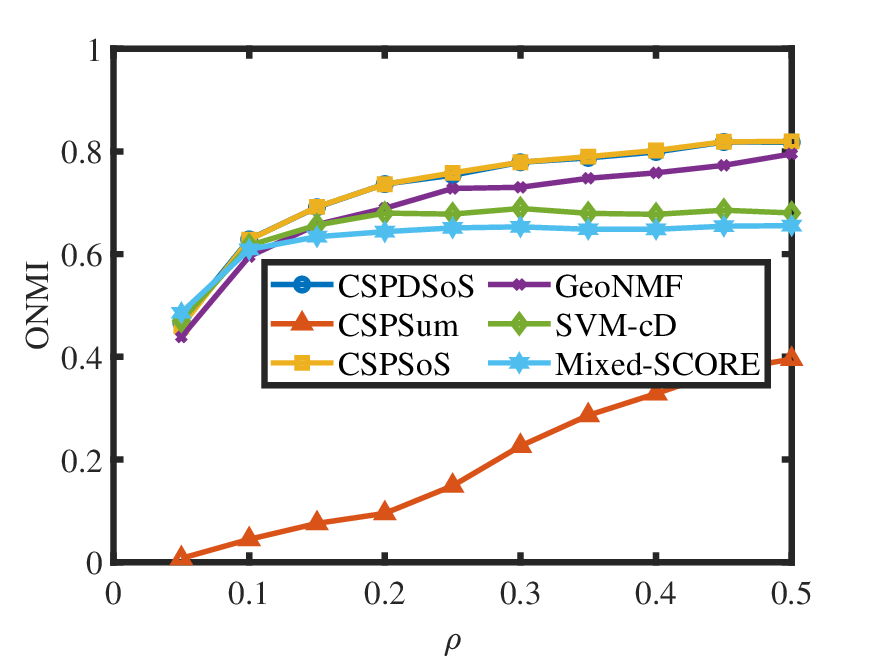}}
\subfigure[Experiment 3]{\includegraphics[width=0.3\textwidth]{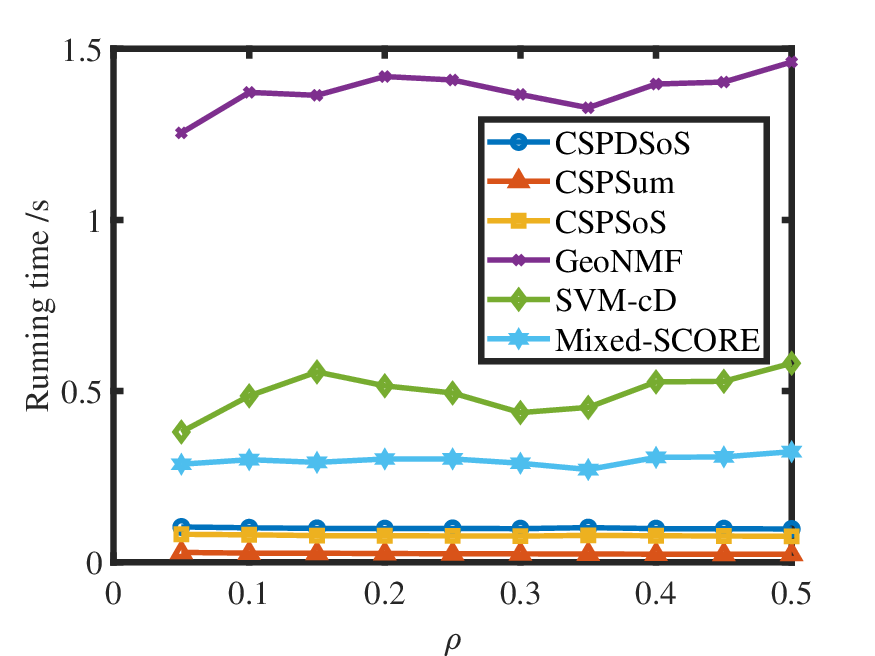}}
}
\caption{Numerical results of Experiments 1-3.}
\label{Ex} 
\end{figure}

\emph{Simulation configurations}. For synthetic multi-layer directed networks, we specify $K=3$ and denote $n^{r}_{0}$ and $n^{c}_{0}$ as the respective number of pure nodes in each row and column community. For the mixed membership of a mixed node $i$ belonging to the row community, we assign $r_{1}=\frac{\mathrm{rand}(1)}{2}$ , $r_{2}=\frac{\mathrm{rand}(1)}{2}$, and $r_{3}=1-r_{1}-r_{2}$, where $\mathrm{rand}(1)$ is a random value in $[0,1]$. Then we set $\Pi_{r}(i,:)=(r_{1},r_{2},r_{3})$. A similar approach is employed to generate the mixed memberships of nodes in the column community. Each entry of $B_{l}$ for $l\in[L]$ is randomly drawn from $[0,1]$. For each experiment, $n,L,\rho, n^{r}_{0}$, and $n^{c}_{0}$ are set independently. Finally, we report the average of each metric, obtained by conducting 100 independent repetitions under each set of configurations.

\emph{Experiment 1.} Here, our goal is to explore how the performance of these methods evolves as the number of nodes increases. To achieve this, we set $L=10$, $\rho=0.02$, $n^{r}_{0}=\frac{n}{4}$, $n^{c}_{0}=\frac{n}{5}$, and vary $n$ within the set $\{500, 1000, 1500, \ldots, 5000\}$.

\emph{Experiment 2.} The objective of this experiment is to assess the impact of increasing the number of layers $L$ in multi-layer directed networks on the methods' effectiveness. For this, we maintain constant values of $n=200$, $\rho=0.1$, $n^{r}_{0}=50$, $n^{c}_{0}=40$, while varying $L$ from the set $\{10, 20, 30, \ldots, 100\}$.

\emph{Experiment 3.} In this experiment, we seek to investigate the influence of the overall sparsity $\rho$ on the performance of the methods. We keep $n=500$, $L=20$, $n^{r}_{0}=80$, $n^{c}_{0}=120$ fixed, and test various values of $\rho$ from the range $\{0.05, 0.1, 0.15, \ldots, 0.5\}$.

\emph{Results.} The experimental results of Experiments 1-3 are displayed in Fig.~\ref{Ex} and they provide a rigorous validation of the proposed CSPDSoS algorithm for overlapping community detection in multi-layer directed networks, while also highlighting the performance of the CSPSoS, GeoNMF, SVM-cD, and Mixed-SCORE algorithms adapted for this task. Experiment 1 demonstrates that as the number of nodes $n$ increases, the CSPDSoS, CSPSoS, GeoNMF, SVM-cD, and Mixed-SCORE algorithms significantly outperform CSPSum in terms of Hamming error, Relative error, and ONMI, indicating more accurate estimation of the row and column membership matrices. This trend aligns with the theoretical prediction that increasing the number of nodes improves detection accuracy. Experiment 2 shows that increasing the number of layers $L$ in the multi-layer directed network further enhances the performance of CSPDSoS compared to the other algorithms, underscoring the value of additional layers in providing more information for community detection. Experiment 3 reveals that as the overall sparsity parameter $\rho$ decreases, representing denser networks, all algorithms achieve better performance, with lower Hamming and Relative errors, and higher ONMI. Particularly, our CSPDSoS enjoy competitive performances with CSPSoS, GeoNMF, SVM-cD, and Mixed-SCORE. This indicates that the proposed CSPDSoS algorithm is particularly effective in handling networks with denser connections. By panel (d) of this figure, we see that all methods are capable of detecting communities in a multi-layer directed network comprising 5000 nodes and 10 layers within 40 seconds. Overall, the consistent improvements across all three experiments not only confirm the theoretical analysis but also demonstrate the superior performance of the CSPDSoS, CSPSoS, SVM-cD, GeoNMF, and Mixed-SCORE methods for overlapping community detection in multi-layer directed networks.
\begin{rem}
Though GeoNMF, SVM-cD, and Mixed-SCORE have competitive performances with our CSPDSoS algorithm, it is important to note that we have modified the original GeoNMF, SVM-cD, and Mixed-SCORE algorithms from single-layer undirected networks to multi-layer directed networks, as done in Remark \ref{algorithmModify}. Therefore, the versions of GeoNMF, SVM-cD, and Mixed-SCORE used in this article are new methods tailored for multi-layer directed networks. Additionally, though these modified methods can detect mixed memberships for multi-layer directed networks generated from our multi-layer MM-ScBM model, we cannot obtain their theoretical analysis immediately. Instead, building their theoretical guarantees would require new models or techniques, which we leave as future work.
\end{rem}
\subsection{Real data}
\begin{table}[h!]
\centering
\caption{List of products considered in this article.}
\label{ProductsList}
\resizebox{\columnwidth}{!}{
\begin{tabular}{cccccccccccc}
\hline
``Soybeans''&``Food prep nes"&``Crude materials"&``Wine"&``Wheat"\\
``Oil, palm"&
``Meat, cattle, boneless (beef \&veal)"&``Cake, soybeans"& ``Beverages, distilled alcoholic"&``Maize'"\\
``Cheese, whole cow milk"&``Rubber natural dry"&``Cigarettes"&``Pastry"&``Chocolate products nes"\\
``Coffee, green"&``Meat, port"&``Meat, chicken"&``Sugar Raw Centrifugal"&``Cotton lint"\\
``Beverages, non alcoholic"&``Bananas"&``Rice, milled"&``Tobacco, unmanufactured"&``Fruit, prepared nes"\\
``Sugar refined"&``Beer of barley"&``Meat, pig"&``Cocoa, beans"&``Pet food"\\
\hline
\end{tabular}
}
\end{table}
We apply our CSPDSoS to the FAO Multiplex Trade Network (FAOMTN), a real-world multi-layer directed network accessible for download at \url{https://manliodedomenico.com/data.php} in this subsection. This dataset constitutes a comprehensive global network of food imports and exports, featuring multiple layers that represent different products. The nodes in this network denote individual countries, while the edges within each layer represent the import/export connections between countries for a particular product. The dataset, compiled in 2010, encompasses 364 products and 213 countries \citep{de2015structural}, where we have consolidated China and China-mainland into a single category labeled ``China" to ensure consistency. We have arranged the layers in decreasing order according to their total import/export value of the product and selected the top 30 products with the highest trading volume (i.e., $n=213, L=30$). For a comprehensive listing of the 30 products utilized in this study, please refer to Table \ref{ProductsList}. Meanwhile, Tables \ref{minImportExport} and \ref{maxImportExport} present the 25 countries with the smallest and largest import/export trade volumes, respectively, ranked based on their trade volume size among 213 countries. Initially, the data was weighted, but we have simplified it to an unweighted format whenever the trading value of the product between countries reaches or exceeds 100. After preprocessing the data, for the FAOMTN multi-layer directed network considered in this article, we have $A_{l}\in\{0,1\}^{213\times213}$ for $l\in[30]$.
\begin{table}[h!]
\centering
\caption{The 25 countries with the smallest import/export trade volumes. The numbers in parentheses represent the ranking of the trade volumes, where 1 indicates the smallest (i.e., ranked last among 213 countries) and 25 indicates the 25th smallest (i.e., ranked 189th among 213 countries).}
\label{minImportExport}
\resizebox{\columnwidth}{!}{
\begin{tabular}{cccccccccccc}
\hline
Mayotte (1)&Andorra (2)&Falkland Islands Malvinas (3)&Comoros (4)&Saint Pierre and Miquelon (5)\\
Tuvalu (6)&Palau (7)&Kiribati (8)&Montserrat (9)&Nauru (10)\\                        Cook Islands (11)&Eritrea (12)&Marshall Islands (13)&Micronesia, Federated States of (14)&Niue (15)\\
Saint Helena (16)&Somalia (17)&Guam (18)&Samoa (19)&Brunei Darussalam (20)\\
Cayman Islands (21)&Grenada (22)&American Samoa (23)&British Virgin Islands (24)&Equatorial Guinea (25)\\
\hline
\end{tabular}
}
\end{table}

\begin{table}[h!]
\footnotesize
\centering
\caption{The 25 countries with the largest import/export trade volumes. The numbers in parentheses represent the ranking of the trade volumes, where 1 indicates the largest (i.e., ranked first among 213 countries) and 25 indicates the 25th largest (i.e., ranked 25th among 213 countries).}
\label{maxImportExport}
\begin{tabular}{cccccccccccc}
\hline
United States (1)&Germany (2)&China (3)&Netherlands (4)&France (5)\\
Brazil (6)&United Kingdom (7)&Italy (8)&Japan (9)&Indonesia (10)\\
Canada (11)&Belgium (12)&Spain (13)&Malaysia (14)&Argentina (15)\\
Mexico (16)&Russia (17)&Australia (18)&Thailand (19)&Denmark (20)\\
Poland (21)&India (22)&Korea, Republic of (23)&Switzerland (24)&Austria (25)\\
\hline
\end{tabular}
\end{table}

Since the actual row and column memberships of the data are unknown, it is not feasible to assess the efficacy of the six methods employed on synthetic data. Instead, we consider the estimated row and column membership matrices generated by our CSPDSoS algorithm as the ground truth. By computing the Hamming error, Relative error, and ONMI for the remaining five methods as the number of communities increases, we aim to investigate how closely these methods compare to our CSPDSoS algorithm in analyzing this real-world dataset. Fig.~\ref{HROFAO} displays the results. We observe that as the number of communities $K$ increases, the estimated mixed memberships from the other five methods diverge more significantly from the estimated mixed membership matrices obtained from our CSPDSoS algorithm.

\begin{figure}
\centering
{\includegraphics[width=0.25\textwidth]{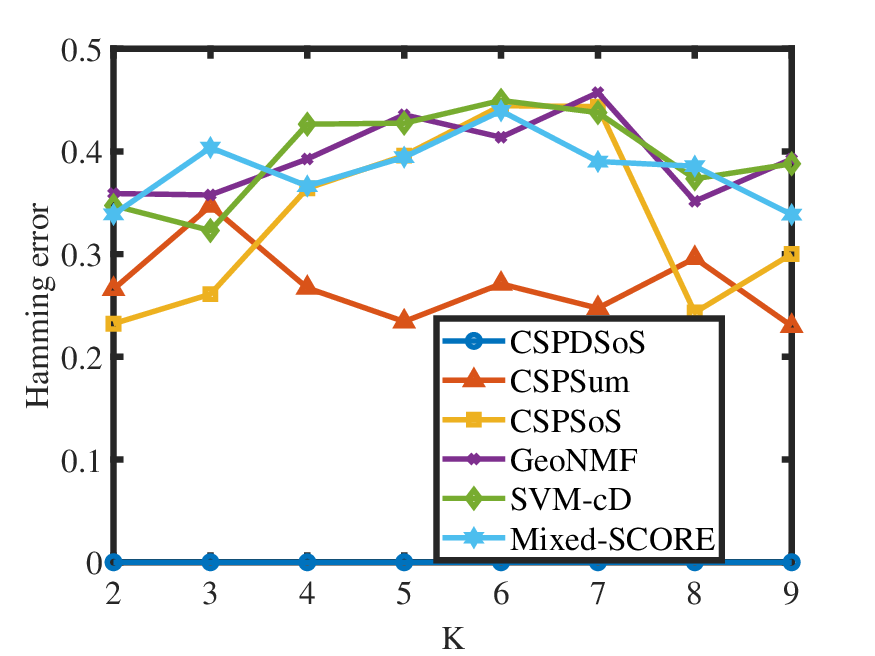}}
{\includegraphics[width=0.25\textwidth]{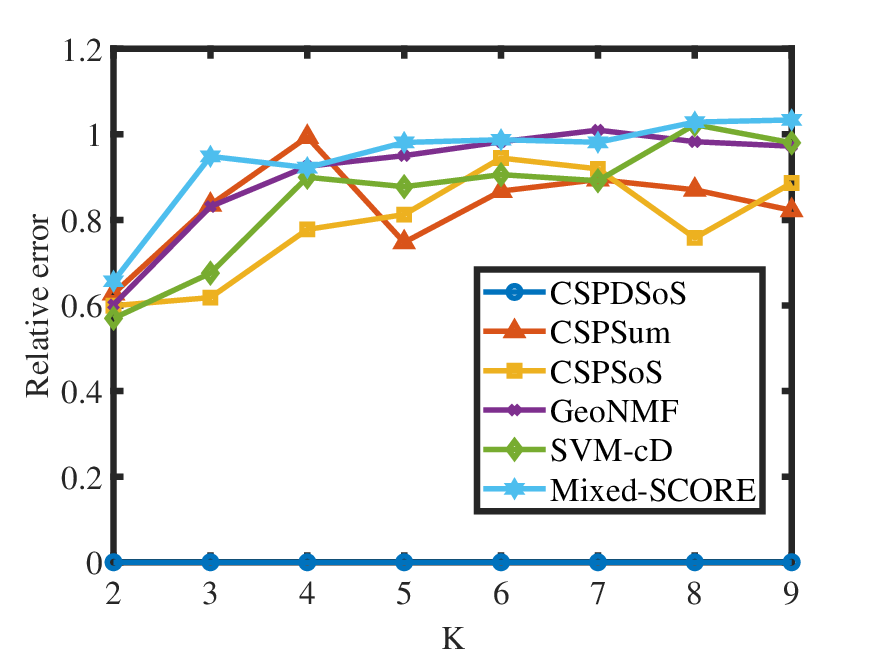}}
{\includegraphics[width=0.25\textwidth]{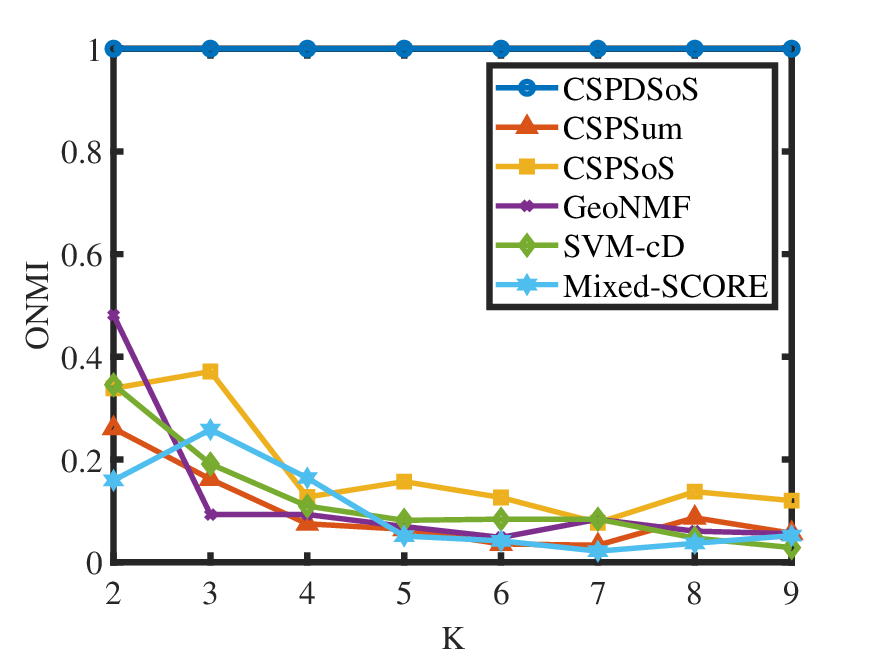}}
\caption{Comparison of different methods on the FAOMTN dataset using estimated row and column membership matrices from our CSPDSoS algorithm as ground truth. Plots show the Hamming error, Relative error, and ONMI for the remaining five methods as the number of communities increases.}
\label{HROFAO} 
\end{figure}
\begin{figure}
\centering
{\includegraphics[width=0.25\textwidth]{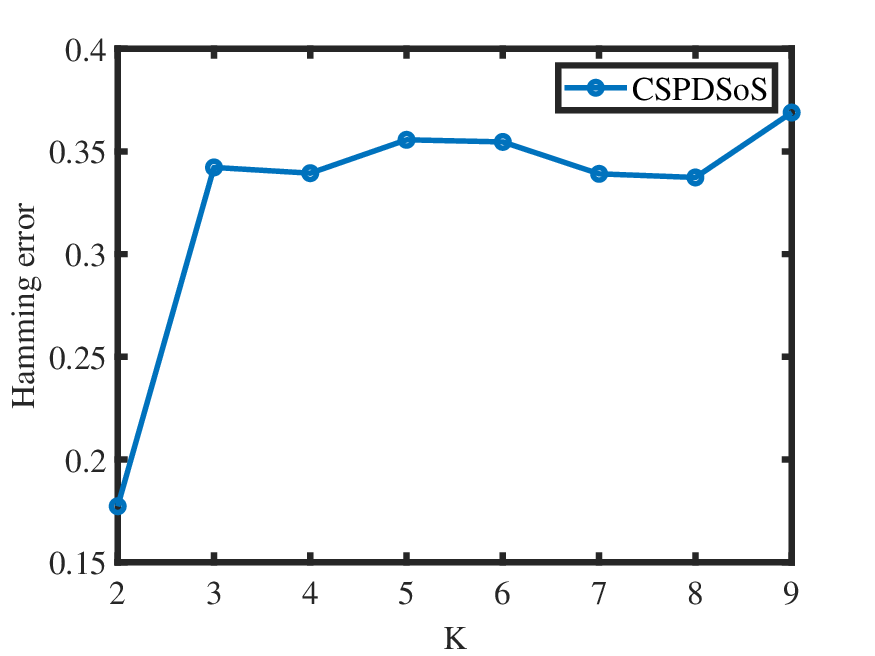}}
{\includegraphics[width=0.25\textwidth]{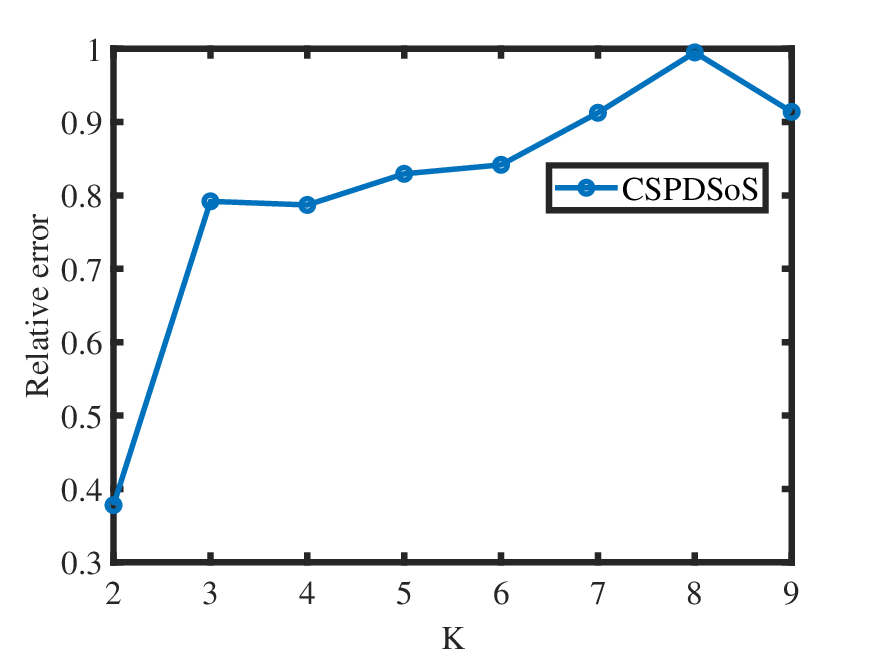}}
{\includegraphics[width=0.25\textwidth]{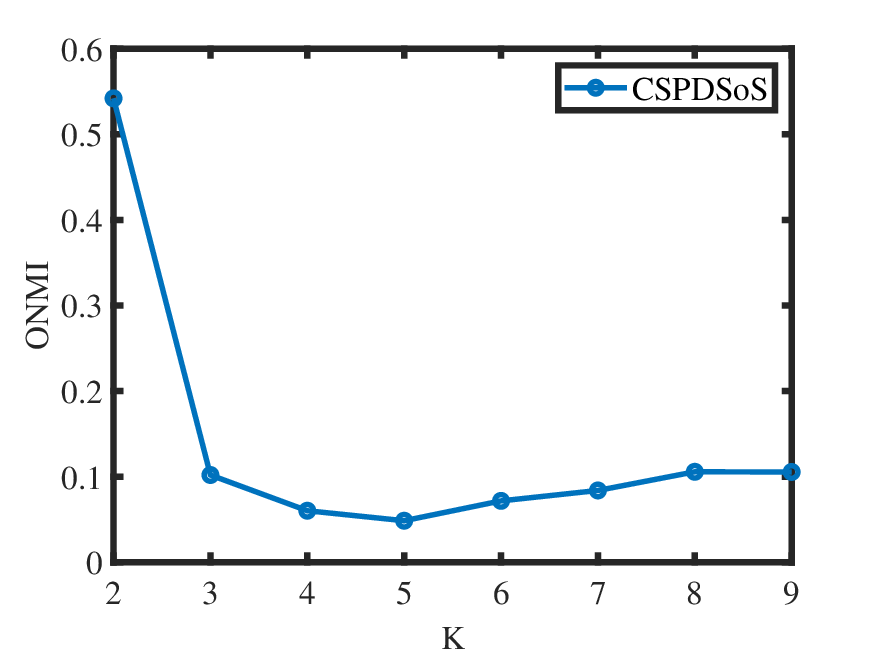}}
\caption{Overall asymmetric degree measured by Hamming error, Relative error, and ONMI between $\hat{\Pi}_{r}$ and $\hat{\Pi}_{c}$ returned by our CSPDSoS algorithm when $K$ increases for the FAOMTN dataset.}
\label{HROFAOCSPDSoS} 
\end{figure}

Since the true mixed membership matrices for the FAOMTN datasets remain unknown, we rely solely on our CSPDSoS algorithm for subsequent analysis. Given that FAOMTN is a multi-layer directed network, we are particularly interested in determining whether there is a significant asymmetry in the row and column community memberships. To answer this question, we compare the discrepancies between $\hat{\Pi}_{r}$ and $\hat{\Pi}_{c}$ generated by our CSPDSoS algorithm as the number of communities $K$ increases. This allows us to investigate the overall asymmetry between row and column communities. Fig.~\ref{HROFAOCSPDSoS} presents the results. It is evident that there is a notable asymmetry between the row and column mixed communities in the FAOMTN network, as indicated by the large Hamming and Relative errors, coupled with the relatively small ONMI values observed between $\hat{\Pi}_{r}$ and $\hat{\Pi}_{c}$. Recall that FAOMTN is a worldwide product import/export network, results in Fig.~\ref{HROFAOCSPDSoS} imply significant differences in the community structures between global import and export trades. From an economic network perspective, this asymmetry suggests that the clustering of countries into communities differs notably depending on whether one is looking at imports or exports.
\begin{figure}
\centering
\includegraphics[width=0.6\textwidth]{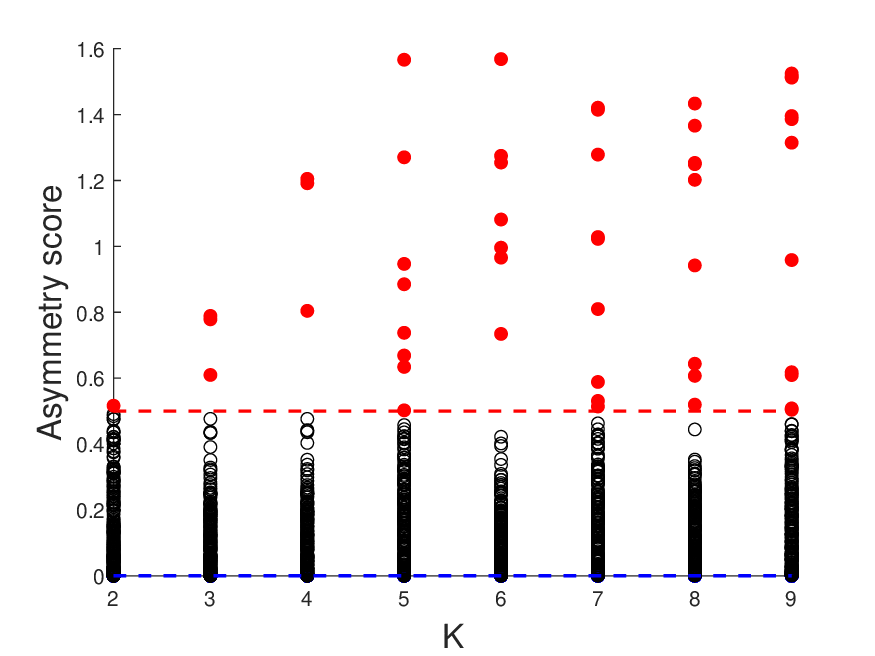}
\caption{Asymmetry score against the number of communities $K$ for the FAOMTN dataset. The red and blue dashed lines on the y-axis indicate the thresholds of $y=0.5$ and $y=0.0001$, respectively. Nodes with Asymmetry scores exceeding 0.5 are denoted by red solid circles, those with Asymmetry scores greater than 0.0001 but less than or equal to 0.5 are represented by black hollow circles, and those with Asymmetry scores less than or equal to 0.0001 are depicted by blue solid circles.}
\label{ASscoreFAO} 
\end{figure}

However, measuring the difference between $\hat{\Pi}_{r}$ and $\hat{\Pi}_{c}$ can only evaluate the overall asymmetric between row and column mixed communities. To find which node has the largest asymmetric pattern, we adopt the index Asymmetry score introduced in \citep{rohe2016co}. For $i\in[n]$, this index is defined as below:
\begin{align*}
a_{i}(K)=\|\hat{U}_{r}(i,:)-\hat{U}_{c}(i,:)\|_{F},
\end{align*}
where $\hat{U}_{r}$ and $\hat{U}_{c}$ are eigenvector matrix of $S_{r}$ and $S_{c}$, respectively. If the network is undirected, the Asymmetry score for any node is 0. A node with a high Asymmetry score exhibits distinct sending and receiving patterns \citep{rohe2016co}. Based on the results presented in Fig.~\ref{ASscoreFAO}, we have the following observations:
\begin{itemize}
  \item Only a few countries have Asymmetry scores no larger than 0.0001, indicating no significant difference between their import and export trades. In particular, we find 14 countries with an Asymmetry score no larger than 0.0001. These countries include Comoros, Nauru, Eritrea, Marshall Islands, Falkland Islands Malvinas, Cook Islands, Montserrat, Saint Pierre and Miquelon, Micronesia, Federated States of, Palau, Kiribati, Tuvalu, Mayotte, Andorra. This suggests that these countries' import and export activities are relatively balanced, possibly indicating economic self-sufficiency or a match between the demand for their imported and exported goods. Alternatively, it could indicate that these countries have minimal or almost no import/export relationships with other countries because we observe that all these 14 countries are listed in Table \ref{minImportExport}, i.e., their trade volumes are among the smallest in the 213 countries considered.
  \item For the majority of countries, the Asymmetry scores fall within the range of $(0.0001,0.5]$, suggesting a moderate difference between their import and export trades. This implies that while there is some disparity between what these countries import and export, the difference is not overwhelming. This moderate asymmetry may reflect a diversified economy where both imports and exports play a significant role.
  \item Only a few countries have Asymmetry scores that surpass 0.5, indicating a pronounced asymmetry between their import and export trades. Specifically, when $K=2$, it is exclusive to Germany; when $K$ is 3 or 4, this characteristic is observed in France, Germany, and the Netherlands; when $K=5$, it applies to Belgium, France, Germany, Italy, Netherlands, Poland, United Kingdom, and United States; when $K=6$, it is seen in Belgium, France, Germany, Italy, Netherlands, United Kingdom, and United States; when $K=7$, it includes Belgium, France, Germany, Italy, Netherlands, Poland, Spain, United Kingdom, and United States; when $K=8$, it pertains to Belgium, Canada, France, Germany, Italy, Netherlands, Spain, United Kingdom, and United States; and when $K=9$, it encompasses Belgium, Canada, China, France, Germany, Italy, Netherlands, Poland, Spain, United Kingdom, and United States. These countries are often major players in the global economy, with substantial import and export  volumes, and they may experience significant trade surpluses or deficits. This observation aligns with the data presented in Table \ref{maxImportExport}, as their trade volumes rank among the highest among all countries. It can be concluded that countries with small import/export volumes typically have very low Asymmetry scores, indicating no significant difference between their imports and exports, while countries with large import/export volumes often have high Asymmetry scores, indicating a significant difference between their imports and exports.
\end{itemize}

\begin{table}[h!]
\footnotesize
\centering
\caption{$\eta$ and $\sigma^{2}$ for all estimated communities for the FAOMTN network.}
\label{etasigma2}
\begin{tabular}{c|cccccc|cccccc}
\hline
&$C_{r,1}$&$C_{r,2}$&$C_{r,3}$&$C_{r,4}$&$C_{r,5}$&$C_{r,6}$&$C_{c,1}$&$C_{c,2}$&$C_{c,3}$&$C_{c,4}$&$C_{c,5}$&$C_{c,6}$\\
\hline
$\eta$&0.0719&0.0826&0.0598&0.0717&0.0832&0.6309&0.0364&0.0558&0.0833&0.1759&0.2473&0.4013\\
$\sigma^{2}$&0.0245&0.0223&0.0105&0.0131&0.0157&0.1691&0.0060&0.0119&0.0234&0.0632&0.0757&0.1285\\
\hline
\end{tabular}
\end{table}
To enhance interpretability, we have set $K=6$, representing the number of continents considered for trade analysis, excluding Antarctica for this particular study. We then apply our CSPDSoS approach to the FAOMTN dataset, assuming 6 row/column (import/export) communities. This results in the estimated mixed membership matrices $\hat{\Pi}_{r}\in[0,1]^{213\times6}$ for import trade and $\hat{\Pi}_{c}\in[0,1]^{213\times6}$ for export trade. Subsequently, we delve into a deeper analysis of these two  estimated mixed membership matrices. Recall that the $n\times 1$ membership vector $\hat{\Pi}_{r}(:,k)$ represents the extent to which each country belongs to the $k$-th estimated row (import) community $C_{r,k}$ for $k \in [K]$. We are interested in examining the likelihood of countries belonging to this community, as well as assessing the diversity within this community. We denote the mean of this vector by $\eta$ and its variance by $\sigma^{2}$. Specifically, we compute $\eta$ to measure the likelihood of countries belonging to this community, whereas $\sigma^{2}$ serves as an indicator of the diversity in these mixed memberships. Based on the results recorded in Table \ref{etasigma2}, we have the following observations:
\begin{itemize}
  \item Row (import) communities: The $\eta$ values range from 0.0598 to 0.6309. This indicates that some communities (e.g., $C_{r,3}$ with $\eta$=0.0598) have relatively low mean membership values, suggesting that these 213 countries are less likely to belong exclusively to these communities. In contrast, communities like $C_{r,6}$ ($\eta$=0.6309) have higher mean membership values, indicating that these countries are more likely to belong to this community. Meanwhile, the $\sigma^{2}$ values range from 0.0105 to 0.1691. Lower variance values (e.g.,$C_{r,3}$ with $\sigma^{2}=0.0105$) indicate that countries within the community have more homogeneous membership values, while higher variance values (e.g., $C_{r,6}$ with $\sigma^{2}=0.1691$) suggest higher membership diversity.
  \item Column (export) communities: The $\eta$ values range from 0.0364 to 0.4013. Similar to the import communities, some export communities have lower mean membership values (e.g., $C_{c,1}$ with $\eta$=0.0364), while others have higher values (e.g., $C_{c,6}$ with $\eta$=0.4013). Meanwhile, the $\sigma^{2}$ values range from 0.0060 to 0.1285. Similar to import communities, some export communities exhibit lower variance (e.g., $C_{c,1}$ with $\sigma^{2}=0.0060$), indicating homogeneity, while others show higher variance (e.g., $C_{c,6}$ with $\sigma^{2}=0.1285$), indicating diversity.
  \item Comparison between import and export Communities: The mean membership values $\eta$ for import communities are generally higher than those for export communities, suggesting that countries tend to have stronger affiliations with import communities. The variance in membership values $\sigma^{2}$ for import communities is also generally higher than for export communities, indicating greater diversity in membership affiliations within import communities. These findings suggest that import communities have stronger and more diverse affiliations among countries.
\end{itemize}

For country $i$, if $\mathrm{max}_{k\in[K]}\hat{\Pi}_{r}(i,k)\leq0.4$, it is classified as a highly mixed import country, suggesting a diversified economy with imports from various countries. Similarly, if $\mathrm{max}_{k\in[K]}\hat{\Pi}_{c}(i,k)\leq0.4$, it is deemed a highly mixed export country. Fig.~\ref{FAOMixedK6rc} shows the lists of these highly mixed import (export) countries and their estimated mixed memberships obtained from CSPDSoS. The results indicate that these countries exhibit low membership values across all import/export communities, suggesting a lack of strong preference for any specific one. This indicates a diversified economy, where they import and export various goods from multiple sources and regions. This pattern may arise from their small size or economic diversification strategies, which aligns with the observation that many of these countries are small and not heavily dependent on a particular good or region. Instead, they trade with multiple partners to satisfy their economic needs, and some may have implemented diversification strategies to minimize dependency on a single export or import. Fig.~\ref{FAOminK6rc} displays the mixed memberships for countries with the smallest trade volumes, as listed in Table \ref{minImportExport}. Like Fig.~\ref{FAOMixedK6rc}, the heatmap reveals that these countries have low membership values in all import and export communities.

Fig.~\ref{FAOmaxK6rc} displays the mixed memberships  for the top 25 countries with the largest import/export trade volumes in the FAOMTN dataset. The heatmaps reveal interesting patterns of membership distributions across different import (row) and export (column) communities. These patterns provide insights into the complex trade dependencies and integration of countries within the global economy. In detail, we have the following observations:
\begin{itemize}
  \item In the import communities, countries like the United States, Germany, Netherlands, France, Italy, India, China, Brazil, Japan, Indonesia, Malaysia, Argentina, and Mexico show strong membership values predominantly in one import community, suggesting these countries have dominant trade relationships with a particular economic bloc. In particular, the pure nodes are found to be United States, Germany, Netherlands, France, Italy, and India. On the other hand, countries like the United Kingdom, Canada, Belgium, Spain, Russia, Denmark, Poland, Switzerland, and Austria exhibit more balanced memberships across multiple import communities, indicating their trade relations are diversified and spread across various economic groups.
  \item Similarly, in the export communities, certain countries like Germany, Netherlands, France, Brazil, United Kingdom, Argentina, Mexico, Thailand, and India display concentrated memberships in one export community, reflecting their dominance in a specific export market. In particular, the pure nodes are found to be Germany, Netherlands, France, and United Kingdom. Conversely, countries such as the United States, China, Italy, Japan, Indonesia, Canada, Belgium, Spain, Malaysia, Russia, Australia, Denmark, Poland, Korea, Republic of, Switzerland, and Austria show more diverse distributed memberships across multiple export communities, suggesting they are active players in multiple export markets.
\end{itemize}

Let the term $\mathrm{argmax}_{k\in[K]}\hat{\Pi}_{r}(i,k)$ (and $\mathrm{argmax}_{k\in[K]}\hat{\Pi}_{c}(i,k)$) refer to the home base row (and column) community of node $i$ for $i\in[n]$. Fig.~\ref{FAOtop3RowColumncommunities} presents the home base community assignments returned by the CSPDSoS method for three selected layers, offering a clear visualization of the identified communities. We observe that (a) in the ``Soybeans" layer and the ``Crude material" layer, there are many highly mixed countries in the import communities, indicated by the prevalence of green squares, while there are only a few highly mixed countries in the export communities; (b) for the ``Food prep nes" layer, there are several highly mixed countries in both import and export communities; (c) across all three layers, the import communities differ significantly from the export communities, reflecting the asymmetric nature of trade relationships. Thus, this figure illustrates the complex and often asymmetric community structures in the trade network.

Fig.~\ref{FAOK6WorldMapR} and Fig.~\ref{FAOK6WorldMapC} present global maps of estimated mixed membership intensity in the six import communities and the six export communities for all 213 countries, respectively. These maps offer valuable insights into the dominant trade positions of countries within each import and export community. For example, in the import community $C_{r,1}$ (where the mixed memberships of countries belonging to this community are recorded in $\hat{\Pi}_{r}(:,1)$, the first column of $\hat{\Pi}_{r}$), it is evident that Germany stands out with the darkest color, signifying its dominant position in this community. This observation aligns with the results shown in Fig.~\ref{FAOmaxK6rc}, where Germany is identified as a pure node in $C_{r,1}$, further reinforcing its central role in this import community. The colors in the last panel of Fig.~\ref{FAOK6WorldMapR} (representing $C_{r,6}$) are generally darker than those in the other five panels, indicating that the membership values of countries belonging to $C_{r,6}$ are typically larger than those in the other import communities. This finding is consistent with the statistical summary in Table \ref{etasigma2}, which shows that the mean membership value ($\eta$) for $C_{r,6}$ is the highest among all import communities. Moreover, the maps reveal that countries with closer geographical locations tend to have similar membership intensities within each import community. For instance, European (or Asian) countries exhibit closely aligned memberships in all import communities, suggesting that these regions share similar import relationships and dependencies. A similar pattern emerges in the global map of export communities (Fig.~\ref{FAOK6WorldMapC}). Germany, Netherlands, France, and United Kingdom stand out as dominant players in various export communities. As with the import communities, the membership intensities of countries within export communities also tend to cluster geographically.

\begin{figure}
\centering
{\includegraphics[width=1\textwidth]{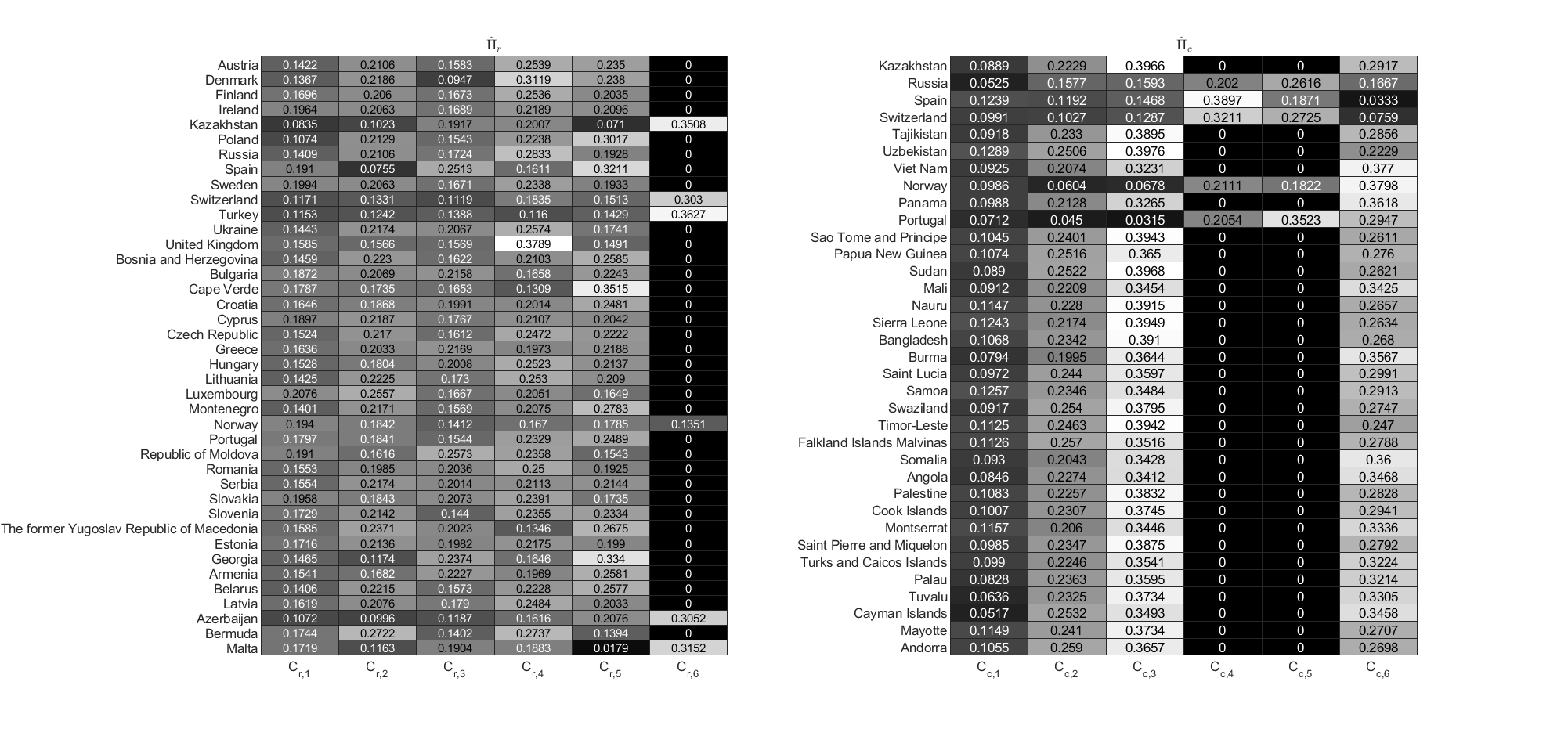}}
\caption{Heatmap of the estimated mixed memberships for highly mixed import/export countries.}
\label{FAOMixedK6rc} 
\end{figure}

\begin{figure}
\centering
{\includegraphics[width=1\textwidth]{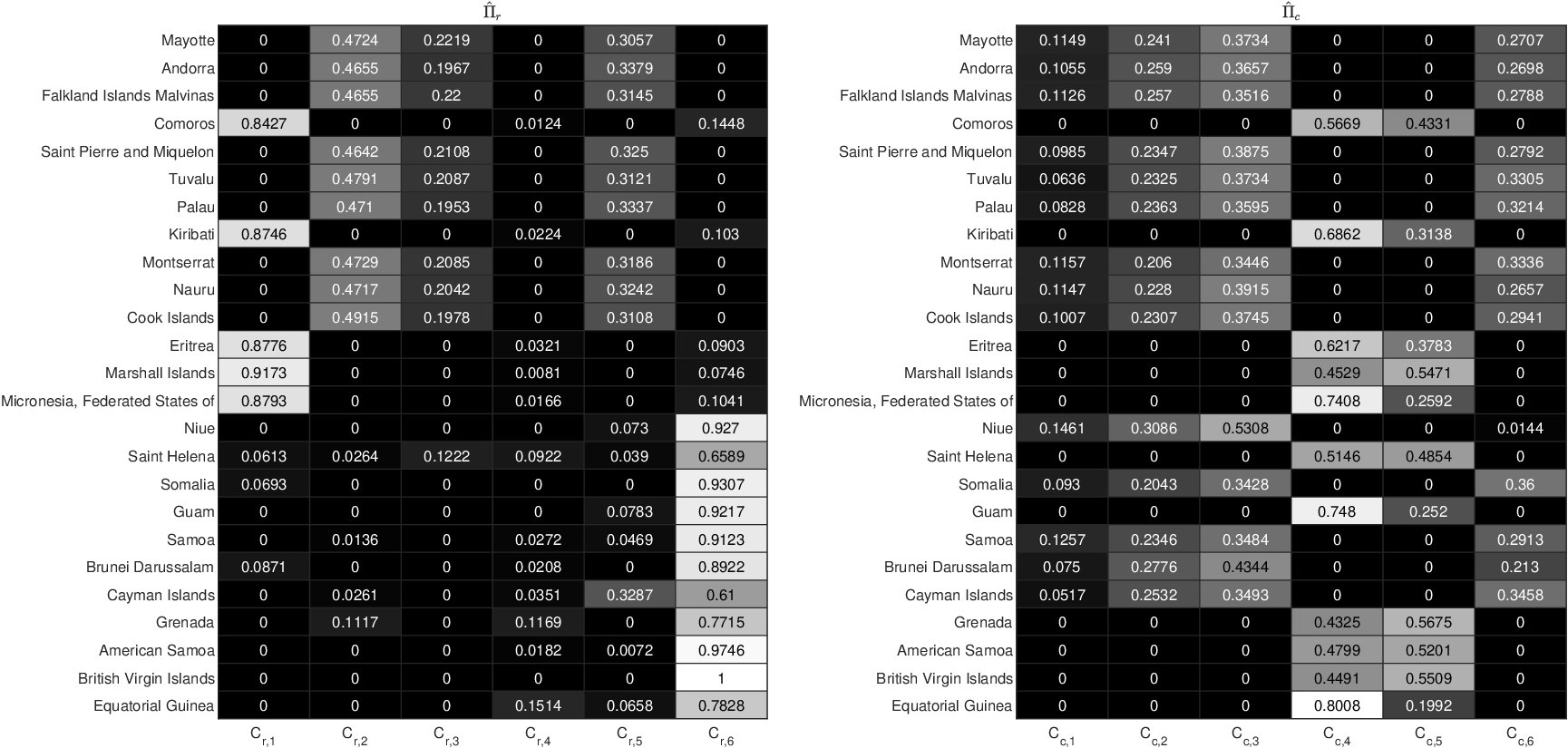}}
\caption{Heatmap of the estimated mixed memberships for countries in Table \ref{minImportExport}.}
\label{FAOminK6rc} 
\end{figure}

\begin{figure}
\centering
{\includegraphics[width=1\textwidth]{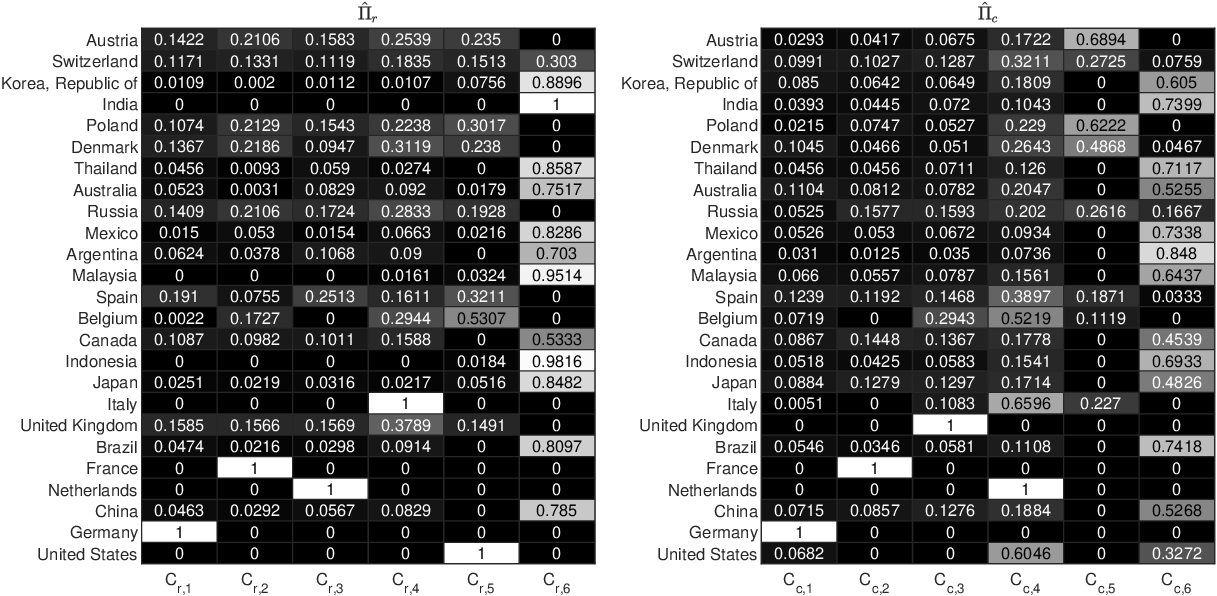}}
\caption{Heatmap of the estimated mixed memberships for countries in Table \ref{maxImportExport}.}
\label{FAOmaxK6rc} 
\end{figure}

\begin{figure}
\centering
\subfigure[Soybeans]{\includegraphics[width=0.485\textwidth]{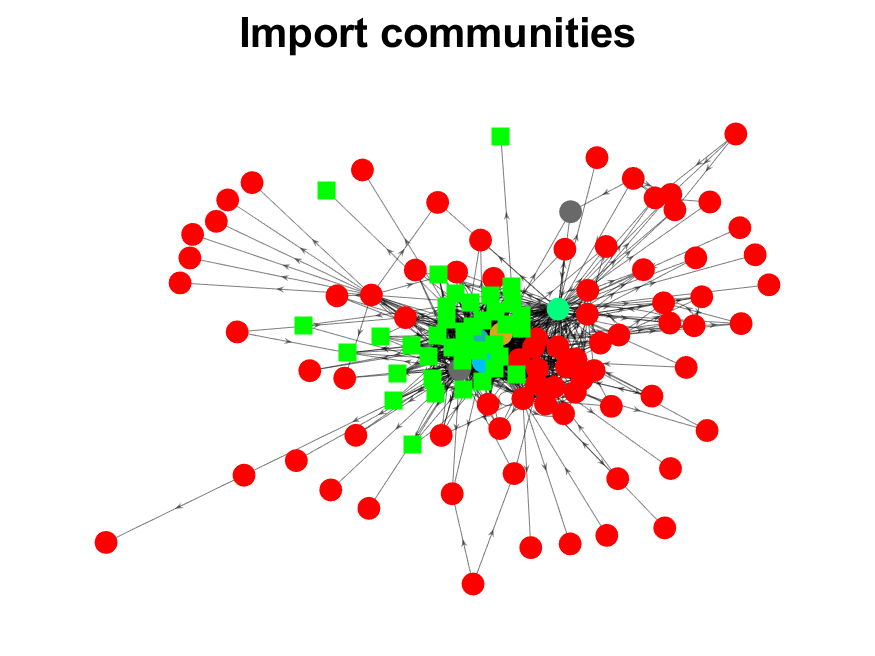}}
\subfigure[Soybeans]{\includegraphics[width=0.485\textwidth]{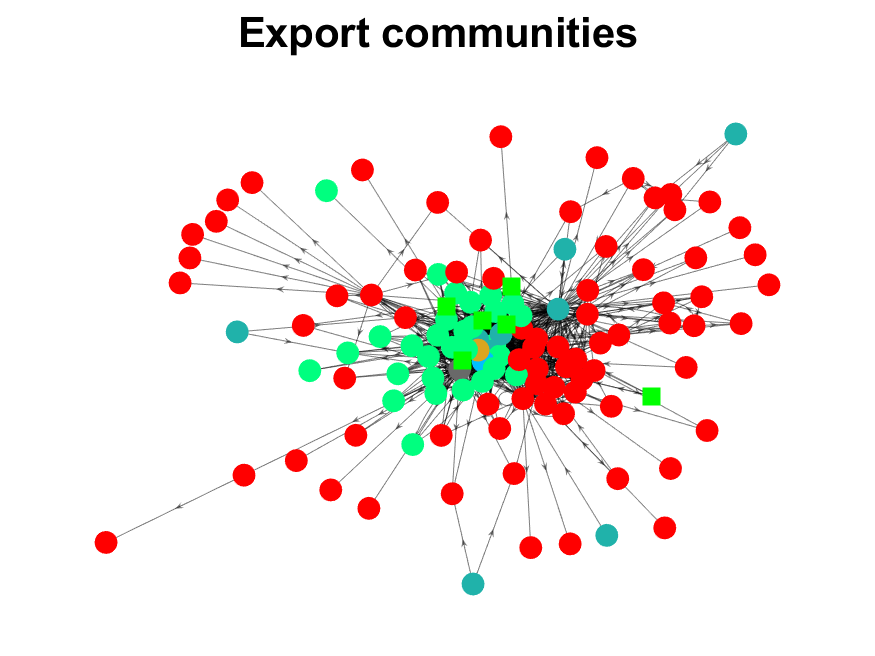}}
\subfigure[Food prep nes]{\includegraphics[width=0.485\textwidth]{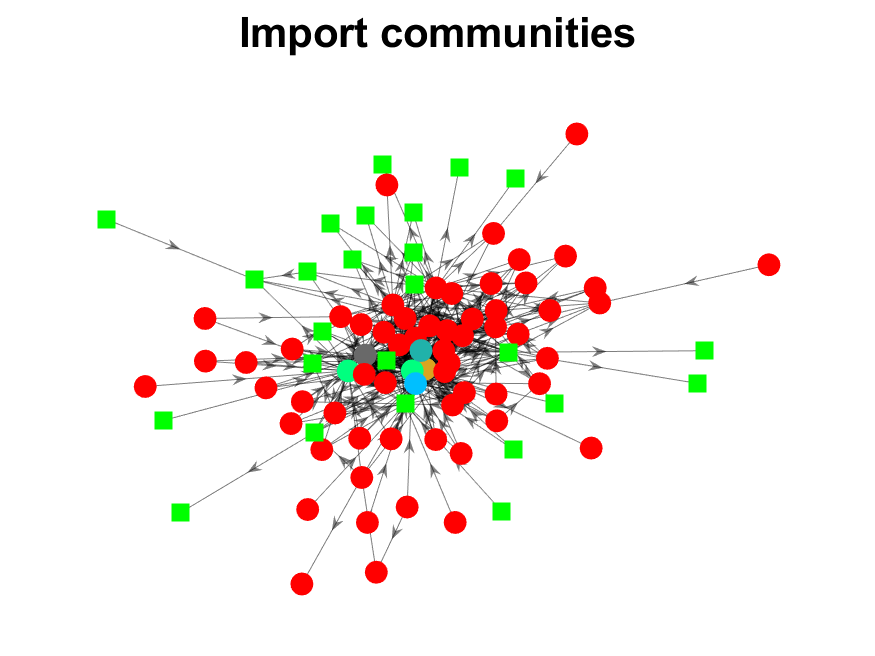}}
\subfigure[Food prep nes]{\includegraphics[width=0.485\textwidth]{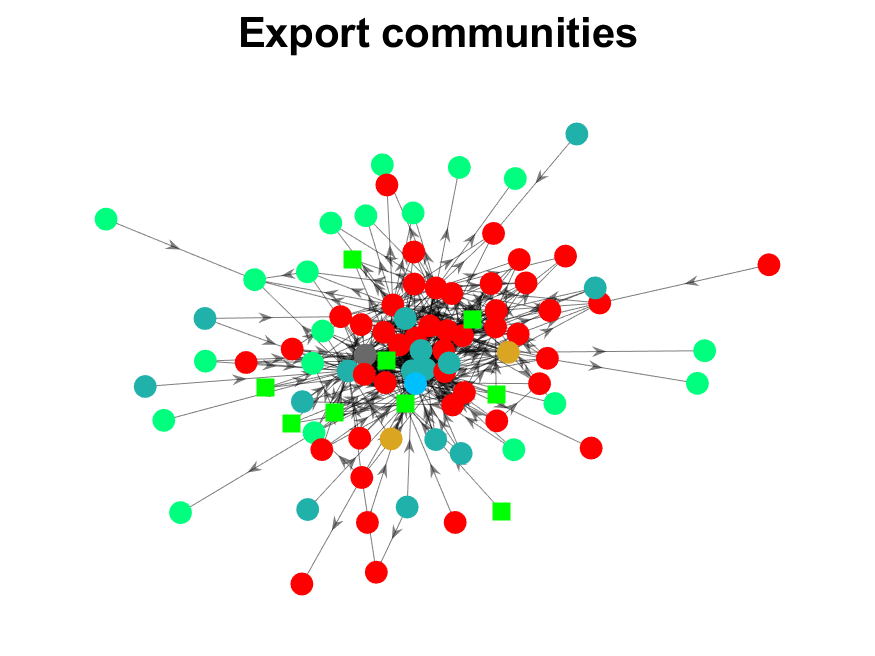}}
\subfigure[Crude material]{\includegraphics[width=0.485\textwidth]{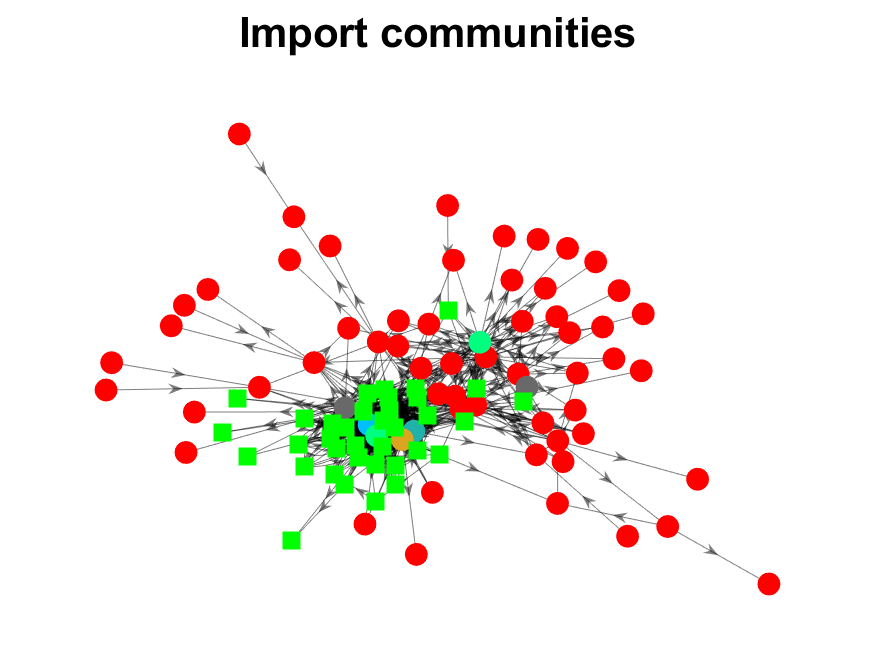}}
\subfigure[Crude material]{\includegraphics[width=0.485\textwidth]{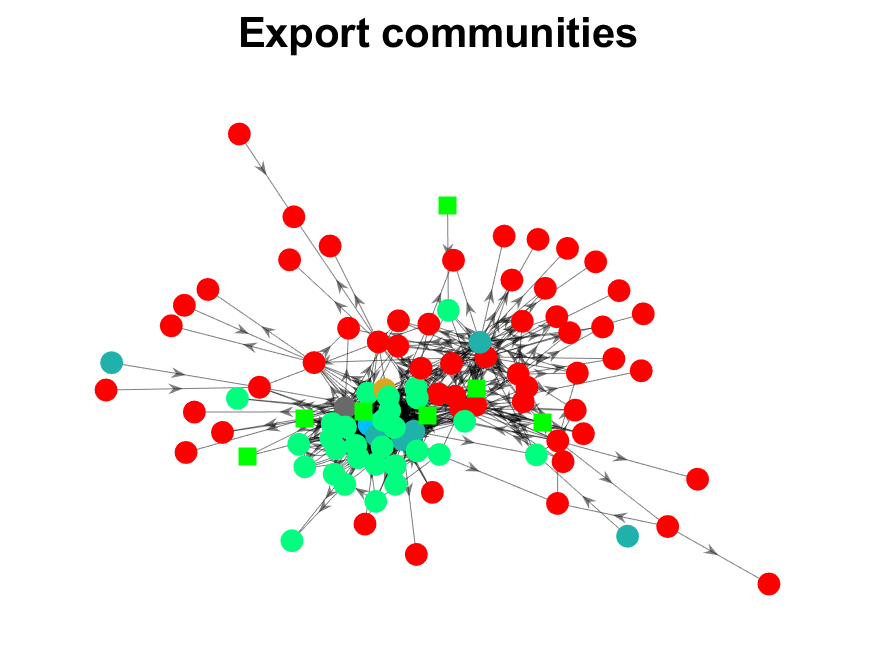}}
\caption{Examples of the home base community assignments for the three layers: Soybeans, Food prep nes, and Crude material. Colors indicate communities and green squares mean highly mixed nodes. For visualization, we do not show isolated nodes for the three layers.}
\label{FAOtop3RowColumncommunities} 
\end{figure}

\begin{figure}
\centering
{\includegraphics[width=0.495\textwidth]{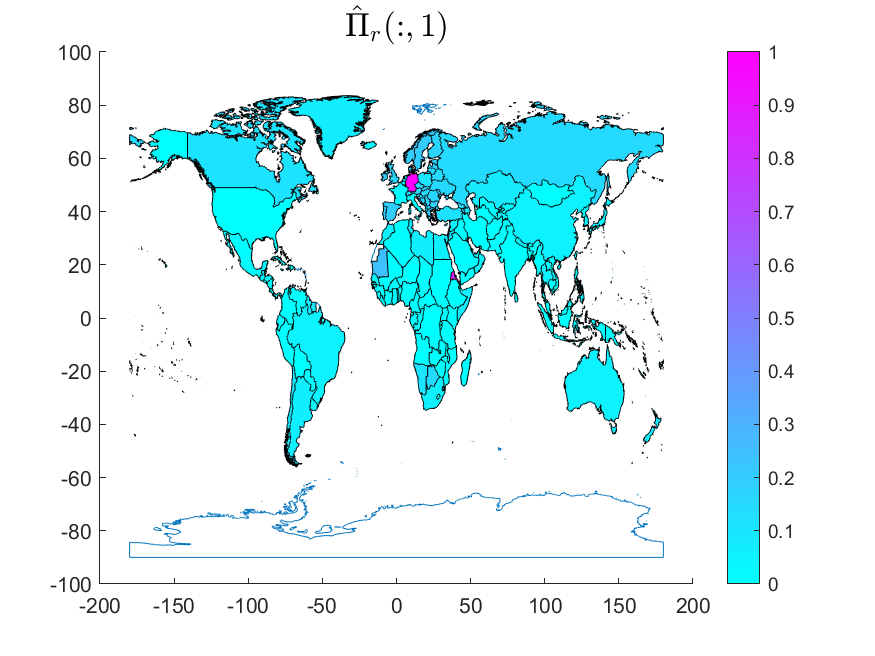}}
{\includegraphics[width=0.495\textwidth]{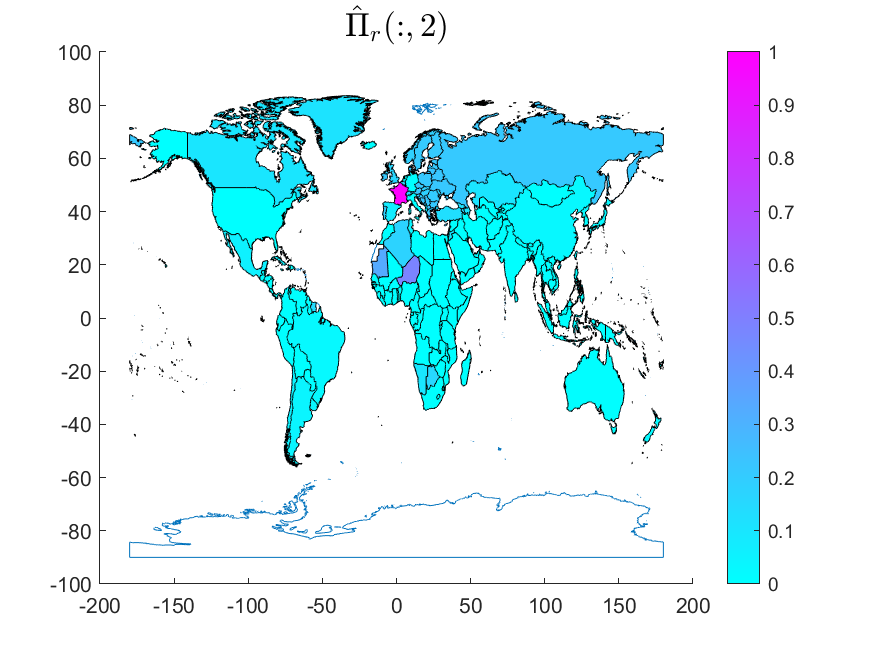}}
{\includegraphics[width=0.495\textwidth]{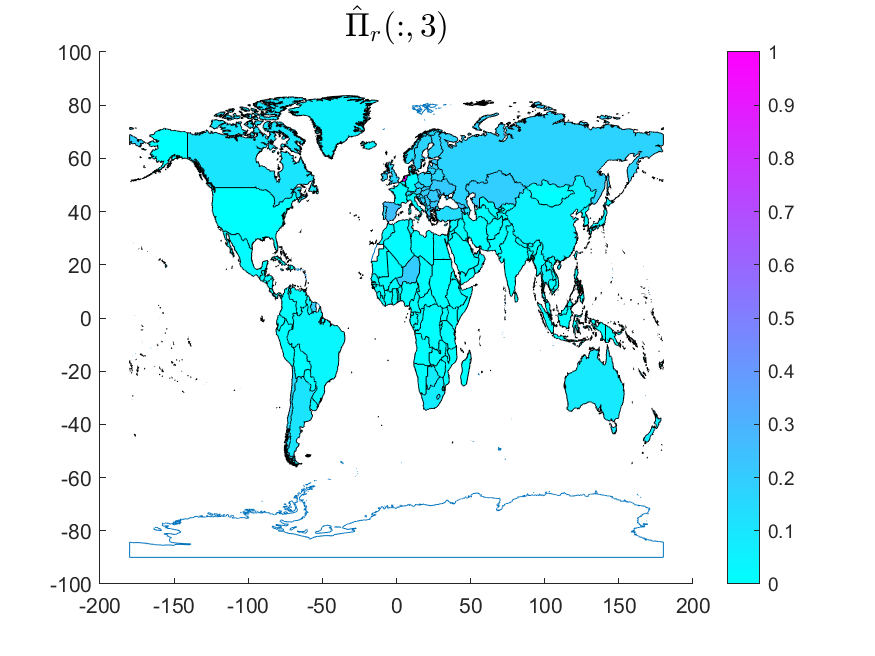}}
{\includegraphics[width=0.495\textwidth]{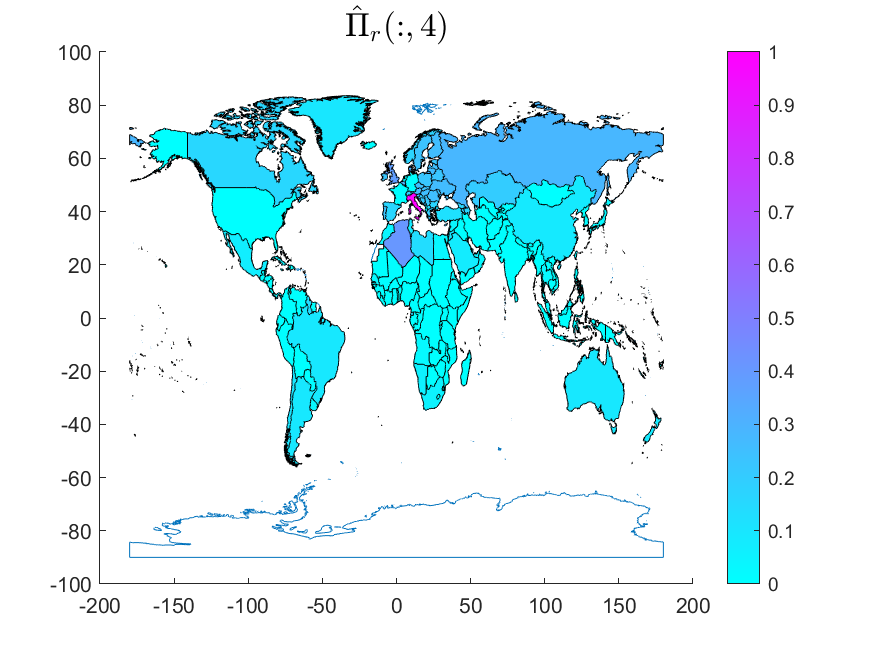}}
{\includegraphics[width=0.495\textwidth]{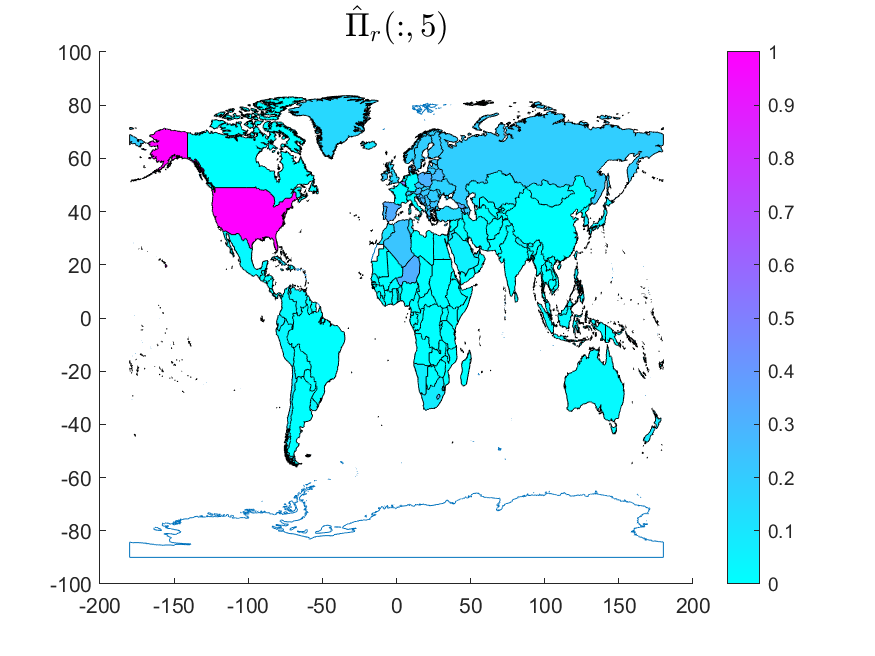}}
{\includegraphics[width=0.495\textwidth]{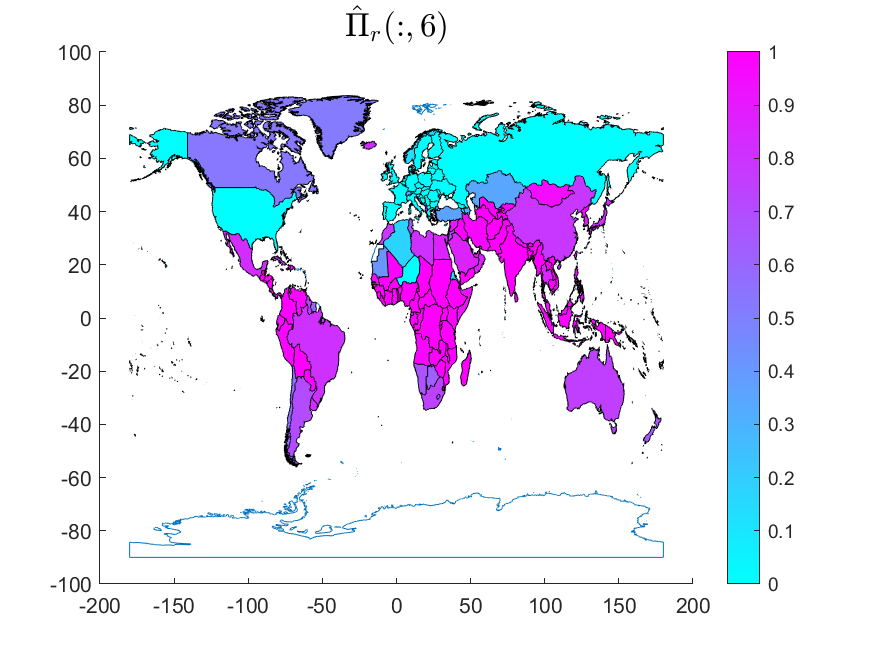}}
\caption{Global map of estimated mixed membership intensity in the six import communities for all the 213 countries. For example, the first sub-figure depicts the membership intensity of each country belonging to the first import community $C_{r,1}$ as estimated by our CSPDSoS algorithm. For $i\in[213]$, recall that the membership intensity for country $i$ in import community 1 is represented by $\hat{\Pi}_{r}(i,1)$ . The color of country $i$ on the map corresponds to the magnitude of $\hat{\Pi}_{r}(i,1)$, with darker colors indicating higher membership intensity and lighter colors indicating lower membership intensity. Similar arguments for the other five import communities.}
\label{FAOK6WorldMapR}
\end{figure}
\begin{figure}
\centering
{\includegraphics[width=0.495\textwidth]{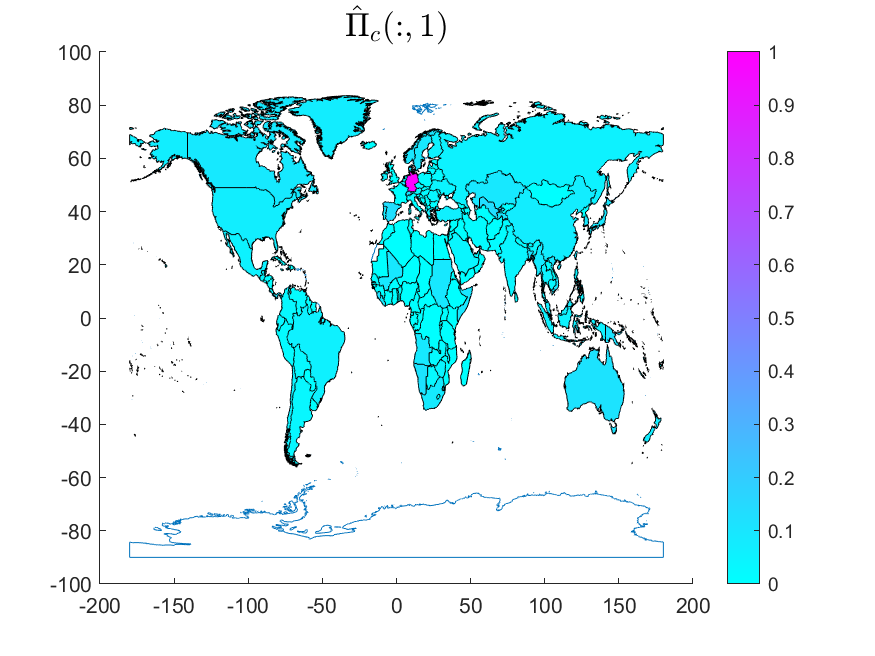}}
{\includegraphics[width=0.495\textwidth]{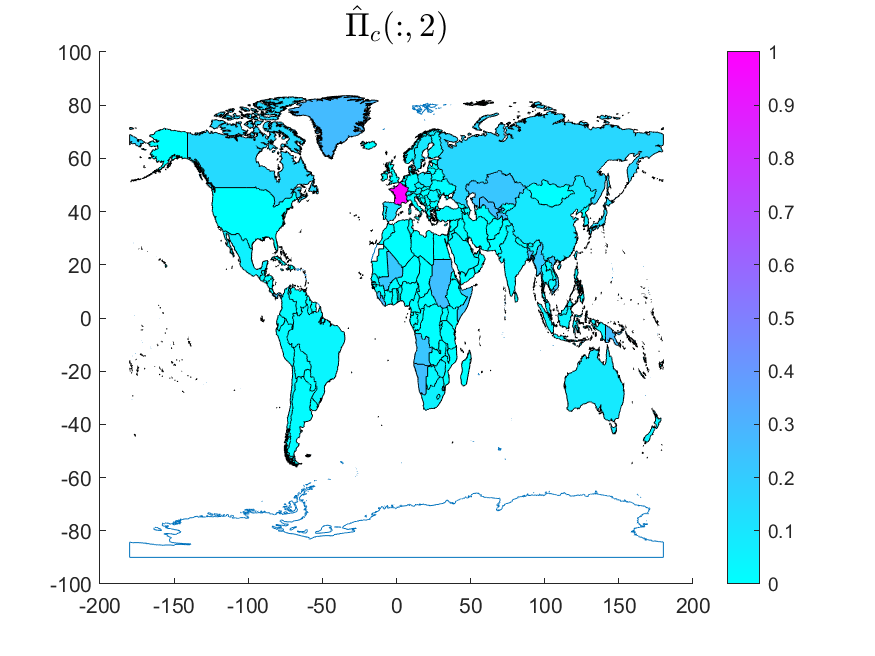}}
{\includegraphics[width=0.495\textwidth]{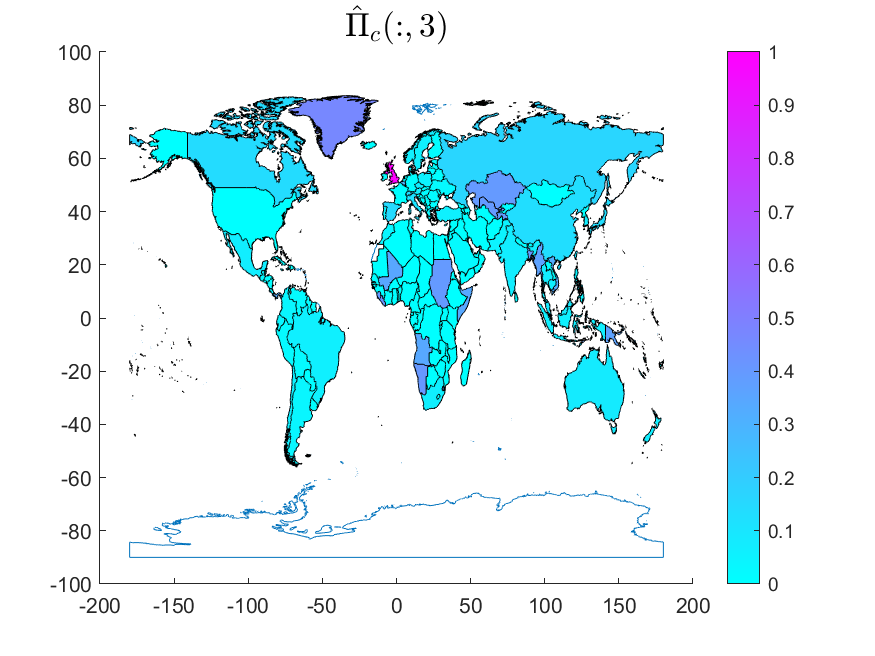}}
{\includegraphics[width=0.495\textwidth]{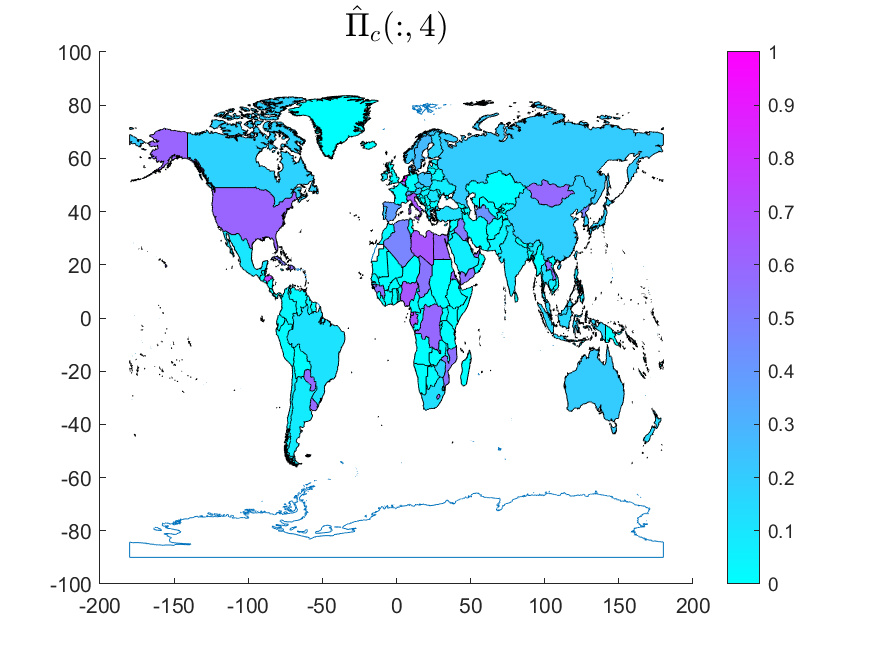}}
{\includegraphics[width=0.495\textwidth]{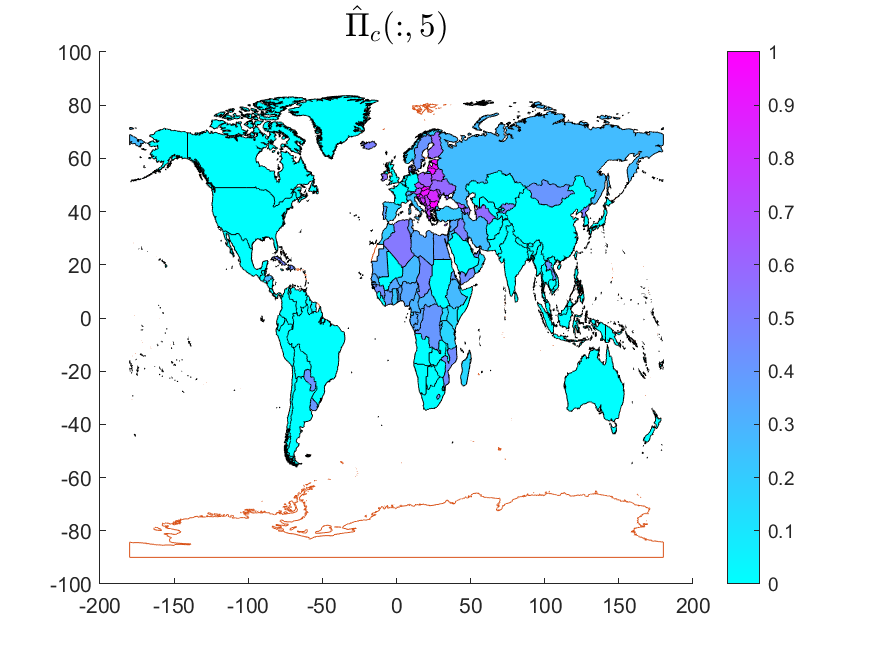}}
{\includegraphics[width=0.495\textwidth]{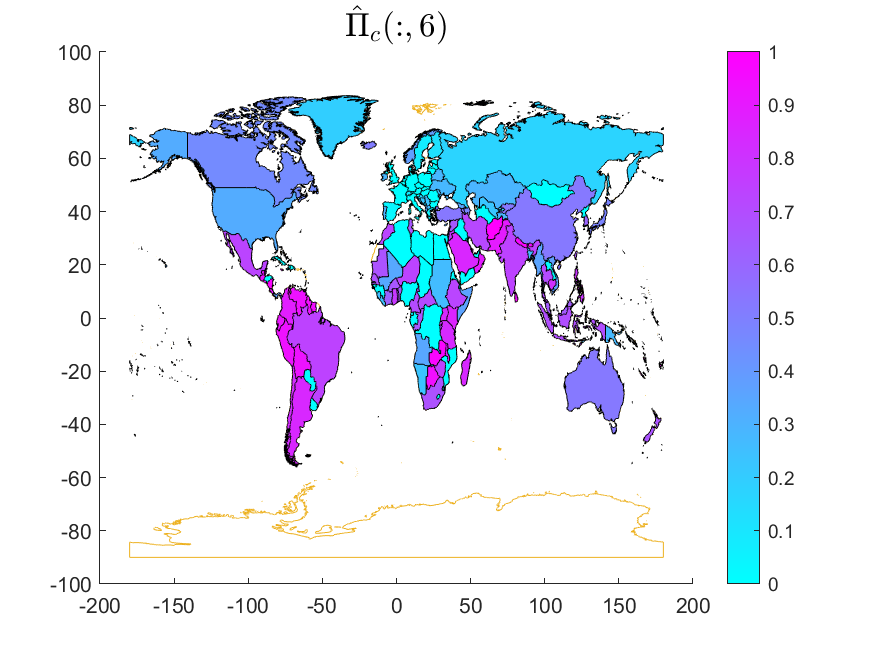}}
\caption{Global map of estimated mixed membership intensity in the six export communities for all the 213 countries.}
\label{FAOK6WorldMapC}
\end{figure}
\section{Conclusion}\label{sec7}
This article contributes to the field of overlapping community detection in multi-layer directed networks by introducing both a novel model and an effective algorithm. The proposed multi-layer MM-ScBM model offers a novel framework that extends existing models to accommodate the complexities of multi-layer directed networks. The spectral procedure we have developed for estimating nodes' memberships enjoys consistent estimation properties under the proposed model. Our theoretical analysis has revealed that factors such as increased sparsity, larger node sets, and more layers in the network can enhance the accuracy of overlapping community detection. Extensive numerical experiments have not only verified our theoretical claims but also highlighted the outstanding performance of our method when compared to its competitors. The application of our algorithm to a real-world multi-layer directed network has produced insightful results, highlighting the potential of our approach in uncovering meaningful overlapping community structures. In summary, our work offers powerful tools for the analysis of multi-layer directed networks.

Future works could focus on several directions to further advance the field of overlapping community detection in multi-layer directed networks. Firstly, extending the proposed multi-layer MM-ScBM model to accommodate dynamic changes in the network structure, such as the evolution of overlapping communities over time, would be a valuable contribution. Secondly, investigating the potential of incorporating additional network features, such as node attributes or edge weights, into the multi-layer MM-ScBM model could enhance its expressiveness and improve the accuracy of overlapping community detection. Thirdly, developing efficient methods to estimate the number of communities in multi-layer networks represents a challenging and promising direction. Lastly, accelerating the proposed spectral procedure would be crucial for handling large-scale multi-layer directed networks.
\section*{CRediT authorship contribution statement}
\textbf{Huan Qing:} Conceptualization, Data curation, Formal analysis, Funding acquisition, Investigation, Methodology, Project administration, Resources, Software, Supervision, Validation, Visualization, Writing – original draft, Writing - review \& editing.
\section*{Declaration of competing interest}
The author declares no competing interests.
\section*{Data availability}
Data will be made available on request. The MATLAB codes for the proposed algorithm are provided in \ref{MatlabCodes}.
\section*{Acknowledgements}
Qing's work was sponsored by the Scientific Research Foundation of Chongqing University of Technology (Grant No. 2024ZDR003), the Science and Technology Research Program of Chongqing Municipal Education Commission (Grant No. KJQN202401168), and the Natural Science Foundation of Chongqing, China (Grant No: CSTB2023NSCQ-LZX0048).
\appendix
\section{Proofs of theoretical results}\label{SecProofs}
\subsection{Proof of Lemma \ref{EigOsum2}.}
\begin{proof} By the pure node condition, $\mathrm{rank}(\Pi_{r})=K$ and $\mathrm{rank}(\Pi_{c})=K$. Since $\tilde{S}_{r}=\Pi_{r}(\rho^{2}\sum_{l\in[L]}B_{l}\Pi'_{c}\Pi_{c}B'_{l})\Pi'_{r}$, we have $\mathrm{rank}(\tilde{S}_{r})=\mathrm{rank}(\sum_{l\in[L]}B_{l}B'_{l})$. Therefore, when $\mathrm{rank}(\sum_{l\in[L]}B_{l}B'_{l})=K$, the rank of $\tilde{S}_{r}$ is $K$, i.e., $\tilde{S}_{r}$ is a low rank matrix with only $K$ nonzero eigenvalues. Let $\tilde{S}_{r}=U_{r}\Lambda_{r}U'_{r}$ be the leading $K$ eigendecomposition of $\tilde{S}_{r}$. Thus, $U_{r}\Lambda_{r}U'_{r}=\Pi_{r}(\rho^{2}\sum_{l\in[L]}B_{l}\Pi'_{c}\Pi_{c}B'_{l})\Pi'_{r}$, which gives that $U_{r}=\Pi_{r}(\rho^{2}\sum_{l\in[L]}B_{l}\Pi'_{c}\Pi_{c}B'_{l})\Pi'_{r}U_{r}\Lambda^{-1}_{r}$. Setting $X_{r}=(\rho^{2}\sum_{l\in[L]}B_{l}\Pi'_{c}\Pi_{c}B'_{l})\Pi'_{r}U_{r}\Lambda^{-1}_{r}$ gives $U_{r}=\Pi_{r}X_{r}$. Since $U_{r}(\mathcal{I}_{r},:)=(\Pi_{r}X_{r})(\mathcal{I}_{r},:)=\Pi_{r}(\mathcal{I}_{r},:)X_{r}=X_{r}$, we have $X_{r}=U_{r}(\mathcal{I}_{r},:)$, i.e., $U_{r}=\Pi_{r}U_{r}(\mathcal{I}_{r},:)$. Similarly, $U_{c}=\Pi_{c}U_{c}(\mathcal{I}_{c},:)$.
\end{proof}
\subsection{Proof of Theorem \ref{MAIN}}
\begin{proof}
First, we have the following lemma.
\begin{lem}\label{boundSumInfinity}
When Assumption \ref{Assum2} holds, with probability $1-o(\frac{1}{n+L})$,
\begin{align*}
\|S_{r}-\tilde{S}_{r}\|_{\infty}=O(\sqrt{\rho^{2}n^{2}L\mathrm{log}(n+L)})+O(\rho^{2}nL), \|S_{c}-\tilde{S}_{c}\|_{\infty}=O(\sqrt{\rho^{2}n^{2}L\mathrm{log}(n+L)})+O(\rho^{2}nL).
\end{align*}
\end{lem}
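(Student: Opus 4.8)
The plan is to separate $S_{r}-\tilde{S}_{r}$ into a deterministic bias and a centered random part, match the bias with the $O(\rho^{2}nL)$ term, and control the random part via the matrix Bernstein inequality of \citep{tropp2012user}, mirroring the debiased-aggregation analysis of \citep{lei2023bias,su2024spectral}. Write $E_{l}=A_{l}-\Omega_{l}$, so $E_{l}$ has independent mean-zero entries bounded by $1$ in absolute value with $\mathrm{Var}(E_{l}(i,j))=\Omega_{l}(i,j)(1-\Omega_{l}(i,j))\le\rho$. Expanding $A_{l}A'_{l}=(\Omega_{l}+E_{l})(\Omega_{l}+E_{l})'$ gives
\[
S_{r}-\tilde{S}_{r}=\sum_{l\in[L]}\left(\Omega_{l}E'_{l}+E_{l}\Omega'_{l}+E_{l}E'_{l}-D^{r}_{l}\right).
\]
Since $A_{l}(i,k)^{2}=A_{l}(i,k)$, the diagonal of $A_{l}A'_{l}$ coincides with that of $D^{r}_{l}$, so $S_{r}$ has zero diagonal and $\mathbb{E}[S_{r}]=\tilde{S}_{r}-\mathrm{diag}(\tilde{S}_{r})$. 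Thus the bias is the diagonal matrix $\mathbb{E}[S_{r}]-\tilde{S}_{r}=-\mathrm{diag}(\tilde{S}_{r})$ with $\|\mathrm{diag}(\tilde{S}_{r})\|_{\infty}=\max_{i\in[n]}\sum_{l\in[L]}\sum_{k\in[n]}\Omega_{l}(i,k)^{2}\le\rho^{2}nL$, which is exactly the claimed $O(\rho^{2}nL)$ term, and it remains to bound $\|S_{r}-\mathbb{E}[S_{r}]\|_{\infty}$ by $O(\sqrt{\rho^{2}n^{2}L\log(n+L)})$.

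I would write $S_{r}-\mathbb{E}[S_{r}]=\sum_{l\in[L]}M_{l}$ with $M_{l}:=\Omega_{l}E'_{l}+E_{l}\Omega'_{l}+(E_{l}E'_{l}-\mathbb{E}[E_{l}E'_{l}])-(D^{r}_{l}-\mathbb{E}[D^{r}_{l}])$, a sum of independent symmetric mean-zero random matrices, and apply matrix Bernstein, which needs a uniform operator-norm bound $R$ on $\|M_{l}\|_{2}$ and the variance proxy $v=\|\sum_{l}\mathbb{E}[M_{l}^{2}]\|_{2}$. Using the standard spectral bound $\|E_{l}\|_{2}\lesssim\sqrt{\rho n}+\sqrt{\log(n+L)}$ for sparse Bernoulli matrices (valid under Assumption \ref{Assum2}), the crude bound $\|\Omega_{l}\|_{2}\le\rho n$, the fact that $\mathbb{E}[E_{l}E'_{l}]$ is diagonal with entries at most $\rho n$ (so $\|E_{l}E'_{l}-\mathbb{E}[E_{l}E'_{l}]\|_{2}\le\|E_{l}\|_{2}^{2}+\rho n$), and a scalar Bernstein bound $\|D^{r}_{l}-\mathbb{E}[D^{r}_{l}]\|_{2}\lesssim\sqrt{\rho n\log(n+L)}$, one obtains $R\lesssim\rho n+\log(n+L)$; and expanding $\mathbb{E}[M_{l}^{2}]$ and bounding row sums such as $\sum_{j}(\Omega_{l}\Omega'_{l})(i,j)\le\rho^{2}n^{2}$ and $\sum_{k}\Omega_{l}(i,k)\le\rho n$ shows $v\lesssim\rho^{2}n^{2}L$, with the quadratic term $E_{l}E'_{l}$ dominant. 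Matrix Bernstein then gives, with probability $1-o((n+L)^{-1})$, concentration of $\sum_{l}M_{l}$ of order $\sqrt{v\log(n+L)}+R\log(n+L)=O(\sqrt{\rho^{2}n^{2}L\log(n+L)})$ under Assumption \ref{Assum2}; combining with the bias term yields the bound for $S_{r}-\tilde{S}_{r}$, and the bound for $S_{c}-\tilde{S}_{c}$ follows verbatim after replacing each $A_{l}$ by $A'_{l}$.

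The main obstacle is the quadratic term $\sum_{l}(E_{l}E'_{l}-\mathbb{E}[E_{l}E'_{l}])$: because its summands are not linear in the Bernoulli entries, both the per-layer operator-norm bound and the variance proxy rely on the sparse random-matrix spectral estimates above, and one must verify that the $(\log(n+L))^{2}$ remainder produced by matrix Bernstein is absorbed into the leading term under Assumption \ref{Assum2}. A second delicate point is transferring the spectral / row-wise concentration of $\sum_{l}M_{l}$ to the $\|\cdot\|_{\infty}$ norm appearing in the statement; this is done as in the corresponding steps of \citep{lei2023bias,su2024spectral}, so that the present lemma is essentially their debiasing bound carried over from single-membership undirected to mixed-membership directed multi-layer networks.
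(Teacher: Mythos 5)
Your decomposition of $S_{r}-\tilde{S}_{r}$ into the deterministic diagonal bias $-\mathrm{diag}(\tilde{S}_{r})$ (bounded by $\rho^{2}nL$) plus a centered fluctuation agrees with the first display of the paper's proof, which makes exactly this split entrywise. The divergence, and the genuine gap, is in how the fluctuation is controlled. You propose to bound $\|\sum_{l}M_{l}\|_{2}$ by matrix Bernstein and then ``transfer'' the result to $\|\cdot\|_{\infty}$ as in \citep{lei2023bias,su2024spectral}. But the lemma is stated in the maximum-absolute-row-sum norm precisely because the downstream eigenvector analysis invokes Theorem 4.2 of \citep{cape2019the}, which needs $\|S_{r}-\tilde{S}_{r}\|_{\infty}$; for a symmetric $n\times n$ matrix the only generic conversion is $\|M\|_{\infty}\leq\sqrt{n}\,\|M\|_{2}$, which turns your spectral bound $O(\sqrt{\rho^{2}n^{2}L\log(n+L)})$ into $O(\sqrt{\rho^{2}n^{3}L\log(n+L)})$ --- off by $\sqrt{n}$, which would break Theorem \ref{MAIN}. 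The references you defer to establish operator-norm concentration and feed it into Davis--Kahan-type arguments, so there is no ``corresponding step'' there that recovers the $\infty$-norm at this rate; that transfer is exactly the part requiring proof, and your proposal does not supply it. The paper never passes through the operator norm: for each fixed $i$ it writes the off-diagonal row sum as $T_{(i)}=\sum_{j\neq i}y_{(ij)}x(j)$ with $y_{(ij)}=\sum_{l,m}(A_{l}(i,m)A_{l}(j,m)-\Omega_{l}(i,m)\Omega_{l}(j,m))$ and a sign vector $x$, computes $\sum_{j\neq i}\mathbb{E}[y_{(ij)}^{2}x^{2}(j)]\leq\rho^{2}n^{2}L$ directly from the product-of-independent-Bernoullis structure (no splitting into cross and quadratic terms in $E_{l}=A_{l}-\Omega_{l}$), applies the scalar Bernstein inequality of \citep{tropp2012user} to $T_{(i)}$, and unions over $i$.

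Two further points in your outline would need repair even for the operator-norm step. First, matrix Bernstein requires an almost-sure bound $R$ on $\|M_{l}\|_{2}$, whereas your $R\lesssim\rho n+\log(n+L)$ rests on the high-probability estimate $\|E_{l}\|_{2}\lesssim\sqrt{\rho n}+\sqrt{\log(n+L)}$; a truncation or conditioning argument is needed to make this legitimate. Second, the claim $v=\|\sum_{l}\mathbb{E}[M_{l}^{2}]\|_{2}\lesssim\rho^{2}n^{2}L$ is not justified for the cross terms: $\sum_{l}\Omega_{l}\mathbb{E}[E_{l}'E_{l}]\Omega_{l}'\preceq\rho n\sum_{l}\Omega_{l}\Omega_{l}'$ has operator norm of order $\rho^{3}n^{3}L$, which exceeds $\rho^{2}n^{2}L$ whenever $\rho n\gg1$ (a regime permitted by Assumption \ref{Assum2}), so the quadratic term $E_{l}E_{l}'$ is not the dominant contribution and the stated rate does not follow from your variance proxy as computed.
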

\begin{proof}
For $\|S_{r}-\tilde{S}_{r}\|_{\infty}$, we have
\begin{align*}
 \|S_{r}-\tilde{S}_{r}\|_{\infty}&=\mathrm{max}_{i\in[n]}(\sum_{j\neq i,j\in[n]}|\sum_{l\in[L]}\sum_{m\in[n]}(A_{l}(i,m)A_{l}(j,m)-\Omega_{l}(i,m)\Omega_{l}(j,m))|+\sum_{l\in[L]}\sum_{m\in[n]}\Omega^{2}_{l}(i,m))\\
 &\leq \mathrm{max}_{i\in[n]}(\sum_{j\neq i,j\in[n]}|\sum_{l\in[L]}\sum_{m\in[n]}(A_{l}(i,m)A_{l}(j,m)-\Omega_{l}(i,m)\Omega_{l}(j,m))|+\sum_{l\in[L]}\sum_{m\in[n]}\rho^{2})\\
 &=\mathrm{max}_{i\in[n]}(\sum_{j\neq i,j\in[n]}|\sum_{l\in[L]}\sum_{m\in[n]}(A_{l}(i,m)A_{l}(j,m)-\Omega_{l}(i,m)\Omega_{l}(j,m))|)+\rho^{2}nL.
\end{align*}
Let $x$ be any vector in $\mathbb{R}^{(n-1)\times1}$, $y_{(ij)}=\sum_{l\in[L]}\sum_{m\in[n]}(A_{l}(i,m)A_{l}(j,m)-\Omega_{l}(i,m)\Omega_{l}(j,m))$ for $i\in[n],j\neq i,j\in[n]$, and $T_{(i)}=\sum_{j\neq i,j\in[n]}y_{(ij)}x(j)$ for $i\in[n]$. Set $\tau=\mathrm{max}(\mathrm{max}_{i\in[n]}\mathrm{max}_{j\in[n]}|\sum_{l\in[L]}\sum_{m\in[n]}(A_{l}(i,m)A_{l}(j,m)-\Omega_{l}(i,m)\Omega_{l}(j,m))|,\\
\mathrm{max}_{i\in[n]}\mathrm{max}_{j\in[n]}|\sum_{l\in[L]}\sum_{m\in[n]}(A_{l}(m,i)A_{l}(m,j)-\Omega_{l}(m,i)\Omega_{l}(m,j))|)$.  For $i\in[n],j\neq i,j\in[n]$, we have $\mathbb{E}(y_{(ij)}x(j))=0$ since $j\neq i$, $|y_{(ij)}x(j)|\leq\tau\|x\|_{\infty}$, and
\begin{align*}
\sum_{j\neq i,j\in[n]}\mathbb{E}[y^{2}_{(ij)}x^{2}(j)]&=\sum_{j\neq i,j\in[n]}x^{2}(j)\mathbb{E}[y^{2}_{(ij)}]=\sum_{j\neq i,j\in[n]}x^{2}(j)\sum_{l\in[L]}\sum_{m\in[n]}\Omega_{l}(i,m)\Omega_{l}(j,m)(1-\Omega_{l}(i,m)\Omega_{l}(j,m))\\
&\leq\sum_{j\neq i,j\in[n]}x^{2}(j)\sum_{l\in[L]}\sum_{m\in[n]}\Omega_{l}(i,m)\Omega_{l}(j,m)\leq\sum_{j\neq i,j\in[n]}x^{2}(j)\sum_{l\in[L]}\sum_{m\in[n]}\rho^{2}=\rho^{2}\|x\|^{2}_{F}nL.
\end{align*}
By Theorem 1.4 (Bernstein inequality) of \cite{tropp2012user}, let $t$ be any nonnegative value, we have
\begin{align*}
\mathbb{P}(|T_{(i)}|\geq t)\leq\mathrm{exp}(\frac{-t^{2}/2}{\rho^{2}\|x\|^{2}_{F}nL+\frac{\tau\|x\|_{\infty}t}{3}}).
\end{align*}
Let $t$ be $\sqrt{\rho^{2}\|x\|^{2}_{F}nL\mathrm{log}(n+L)}\times\frac{\alpha+1+\sqrt{(\alpha+1)(\alpha+19)}}{3}$ for any $\alpha\geq0$. When $\rho^{2}\|x\|^{2}_{F}nL\geq\tau^{2}\|x\|^{2}_{\infty}\mathrm{log}(n+L)$ holds, the following inequality holds:
\begin{align*}
\mathbb{P}(|T_{(i)}|\geq t)\leq\mathrm{exp}(-(\alpha+1)\mathrm{log}(n+L)\frac{1}{\frac{18}{(\sqrt{\alpha+1}+\sqrt{\alpha+19})^{2}}+\frac{2\sqrt{\alpha+1}}{\sqrt{\alpha+1}+\sqrt{\alpha+19}}\sqrt{\frac{\tau^{2}\|x\|^{2}_{\infty}\mathrm{log}(n+L)}{\rho^{2}\|x\|^{2}_{F}nL}}})\leq\frac{1}{(n+L)^{\alpha+1}}.
\end{align*}
Since $x$ can be any $(n-1)\times1$ vector and $\alpha$ is any nonnegative value, letting $x$'s elements be in $\{-1,1\}$ and $\alpha=1$ gives: when $\rho^{2}n^{2}L\geq\tau^{2}\mathrm{log}(n+L)$, with probability $1-o(\frac{1}{n+L})$,
\begin{align*}
\mathrm{max}_{i\in[n]}T_{(i)}=O(\sqrt{\rho^{2}n^{2}L\mathrm{log}(n+L)}),
\end{align*}
which gives $\|S_{r}-\tilde{S}_{r}\|_{\infty}=O(\sqrt{\rho^{2}n^{2}L\mathrm{log}(n+L)})+O(\rho^{2}nL)$.
Similarly, $\|S_{c}-\tilde{S}_{c}\|_{\infty}=O(\sqrt{\rho^{2}n^{2}L\mathrm{log}(n+L)})+O(\rho^{2}nL)$. For simplicity, we restrict our consideration to the highly sparse regime $\tau\leq c_{3}$ for some constant $c_{3}>0$. For this sparse regime, the requirement $\rho^{2}n^{2}L\geq\tau^{2}\mathrm{log}(n+L)$ reduces to Assumption \ref{Assum2}.
\end{proof}

According to Theorem 4.2 \citep{cape2019the}, when $|\lambda_{K}(\tilde{S}_{r})|\geq 4\|S_{r}-\tilde{S}_{r}\|_{\infty}$ and $|\lambda_{K}(\tilde{S}_{c})|\geq 4\|S_{c}-\tilde{S}_{c}\|_{\infty}$ hold, there are two orthogonal matrices $\mathcal{O}_{r}$ and $\mathcal{O}_{c}$ such that
\begin{align*}
\|\hat{U}_{r}-U_{r}\mathcal{O}_{r}\|_{2\rightarrow\infty}\leq14\frac{\|S_{r}-\tilde{S}_{r}\|_{\infty}\|U_{r}\|_{2\rightarrow\infty}}{|\lambda_{K}(\tilde{S}_{r})|},
\|\hat{U}_{c}-U_{c}\mathcal{O}_{c}\|_{2\rightarrow\infty}\leq14\frac{\|S_{c}-\tilde{S}_{c}\|_{\infty}\|U_{c}\|_{2\rightarrow\infty}}{|\lambda_{K}(\tilde{S}_{c})|}.
\end{align*}
Because $\varpi_{r}:=\|\hat{U}_{r}\hat{U}'_{r}-U_{r}U'_{r}\|_{2\rightarrow\infty}\leq2\|\hat{U}_{r}-U_{r}\mathcal{O}_{r}\|_{2\rightarrow\infty}$ and $\varpi_{c}:=\|\hat{U}_{c}\hat{U}'_{c}-U_{c}U'_{c}\|_{2\rightarrow\infty}\leq2\|\hat{U}_{c}-U_{c}\mathcal{O}_{c}\|_{2\rightarrow\infty}$, we have
\begin{align*}
\varpi_{r}\leq28\frac{\|S_{r}-\tilde{S}_{r}\|_{\infty}\|U_{r}\|_{2\rightarrow\infty}}{|\lambda_{K}(\tilde{S}_{r})|}, \varpi_{c}\leq28\frac{\|S_{c}-\tilde{S}_{c}\|_{\infty}\|U_{c}\|_{2\rightarrow\infty}}{|\lambda_{K}(\tilde{S}_{c})|}.
\end{align*}
Based on Lemma 3.1 \citep{mao2021estimating} and Condition \ref{condition}, we have $\|U_{r}\|_{2\rightarrow\infty}=O(\sqrt{\frac{1}{n}})$ and $\|U_{c}\|_{2\rightarrow\infty}=O(\sqrt{\frac{1}{n}})$, which give that
\begin{align*}
\varpi_{r}=O(\frac{\|S_{r}-\tilde{S}_{r}\|_{\infty}}{|\lambda_{K}(\tilde{S}_{r})|\sqrt{n}}), \varpi_{c}=O(\frac{\|S_{c}-\tilde{S}_{c}\|_{\infty}}{|\lambda_{K}(\tilde{S}_{c})|\sqrt{n}}).
\end{align*}
Combining Assumption \ref{Assum22} and Condition \ref{condition} gives the following result:
\begin{align*}
|\lambda_{K}(\tilde{S}_{r})|&=\sqrt{\lambda_{K}((\sum_{l\in[L]}\Omega_{l}\Omega'_{l})^{2})}=\sqrt{\lambda_{K}((\sum_{l\in[L]}\rho^{2}\Pi_{r} B_{l}\Pi'_{c}\Pi_{c} B'_{l}\Pi'_{r})^{2})}=\rho^{2}\sqrt{\lambda^{2}_{K}(\sum_{l\in[L]}\Pi_{r} B_{l}\Pi'_{c}\Pi_{c}B'_{l}\Pi'_{r})}\\
&=\rho^{2}\sqrt{\lambda^{2}_{K}(\Pi_{r}(\sum_{l\in[L]}B_{l}\Pi'_{c}\Pi_{c} B'_{l})\Pi'_{r})}=\rho^{2}\sqrt{\lambda^{2}_{K}(\Pi'_{r}\Pi_{r}(\sum_{l\in[L]}B_{l}\Pi'_{c}\Pi_{c} B'_{l}))}\geq\rho^{2}\lambda_{K}(\Pi'_{r}\Pi_{r})\sqrt{\lambda^{2}_{K}(\sum_{l\in[L]}B_{l}\Pi'_{c}\Pi_{c} B'_{l})}\\
&=O(\rho^{2}\lambda_{K}(\Pi'_{r}\Pi_{r})\lambda_{K}(\Pi'_{c}\Pi_{c})|\lambda_{K}(\sum_{l\in[L]}B_{l}B'_{l})|)=O(\rho^{2}n^{2}L).
\end{align*}
Similarly, we have $|\lambda_{K}(\tilde{S}_{c})|\geq O(\rho^{2}n^{2}L)$. Therefore, we have
\begin{align*}
\varpi_{r}=O(\frac{\|S_{r}-\tilde{S}_{r}\|_{\infty}}{\rho^{2}n^{2.5}L}), \varpi_{c}=O(\frac{\|S_{c}-\tilde{S}_{c}\|_{\infty}}{\rho^{2}n^{2.5}L}).
\end{align*}
By the proof of Theorem 3.2 of \citep{mao2021estimating}, we have
\begin{align*}
&\mathrm{max}_{i\in[n]}\|e'_{i}(\hat{\Pi}_{r}-\Pi_{r}\mathcal{P}_{r})\|_{1}=O(\varpi_{r}\kappa(\Pi'_{r}\Pi_{r})\sqrt{\lambda_{1}(\Pi'_{r}\Pi_{r})})=O(\varpi_{r}\sqrt{\frac{n}{K}})=O(\varpi_{r}\sqrt{n})=O(\frac{\|S_{r}-\tilde{S}_{r}\|_{\infty}}{\rho^{2}n^{2}L}),\\
&\mathrm{max}_{i\in[n]}\|e'_{i}(\hat{\Pi}_{c}-\Pi_{c}\mathcal{P}_{c})\|_{1}=O(\varpi_{c}\kappa(\Pi'_{c}\Pi_{c})\sqrt{\lambda_{1}(\Pi'_{c}\Pi_{c})})=O(\varpi_{c}\sqrt{\frac{n}{K}})=O(\varpi_{c}\sqrt{n})=O(\frac{\|S_{c}-\tilde{S}_{c}\|_{\infty}}{\rho^{2}n^{2}L}),
\end{align*}
where $\mathcal{P}_{r}$ and $\mathcal{P}_{c}$ are two $K\times K$ permutation matrices. By Lemma \ref{boundSumInfinity}, this theorem holds.
\end{proof}
\section{MATLAB codes of CSPDSoS}\label{MatlabCodes}
The MATLAB codes of CSPDSoS are provided below:
\begin{lstlisting}
function [Pi_r, Pi_c] = CSPDSoS(A_all, K)
    % Inputs:
    %   A_all - A tensor with A_all(:,:,l) being the adjacency matrix of layer l
    %   K - Number of communities to detect
    % Outputs:
    %   Pi_r - Estimated row membership matrix
    %   Pi_c - Estimated column membership matrix

    % Number of nodes
    n = size(A_all, 1);
    % Number of layers
    L = size(A_all, 3);

    % Initialize the debiased aggregation matrices
    S_r = zeros(n, n);
    S_c = zeros(n, n);

    % Compute the debiased aggregation matrices
    for l = 1:L
        Al = A_all(:, :, l);
        Drl = diag(sum(Al, 2)); % Diagonal matrix of row sums
        Dcl = diag(sum(Al, 1)); % Diagonal matrix of column sums
        S_r = S_r + Al * Al' - Drl; % Debiased aggregation matrix for sending patterns
        S_c = S_c + Al' * Al - Dcl; % Debiased aggregation matrix for receiving patterns
    end

    % Compute the leading K eigenvectors of S_r and S_c
    [U_r, ~] = eigs(S_r, K); % Leading K eigenvectors for sending patterns
    [U_c, ~] = eigs(S_c, K); % Leading K eigenvectors for receiving patterns

    % Find the indices of the K purest nodes using SPA
    [I_r] = SPA(U_r, K); % Indices of pure nodes for sending patterns
    [I_c] = SPA(U_c, K); % Indices of pure nodes for receiving patterns

    % Compute the membership matrix for sending patterns
    C_r = U_r(I_r, :); % Corner matrix in U_r
    Y_r = max(0, U_r / C_r);
    Pi_r = zeros(n, K); % Initialize row-wise membership matrix
    for i = 1:n
        Pi_r(i, :) = Y_r(i, :) / sum(Y_r(i, :));
    end

    % Compute the membership matrix for receiving patterns
    C_c = U_c(I_c, :); % Corner matrix in U_c
    Y_c = max(0, U_c / C_c);
    Pi_c = zeros(n, K); % Initialize column-wise membership matrix
    for i = 1:n
        Pi_c(i, :) = Y_c(i, :) / sum(Y_c(i, :));
    end

    % Nested function: SPA (Successive Projection Algorithm)
    function [pure] = SPA(X, K)
        % SPA: Successive Projection Algorithm
        % Inputs:
        %   X - Feature matrix
        %   K - Number of purest nodes to select
        % Outputs:
        %   pure - Indices of the K purest nodes

        % Initialize empty array for pure node indices
        pure = [];
        % Compute row norms of the feature matrix
        row_norm = vecnorm(X').^2';

        % Iterate K times to select the K purest nodes
        for i = 1:K
            % Find the node with the maximum row norm
            [~, idx_tmp] = max(row_norm);
            % Add the index to the list of pure nodes
            pure = [pure idx_tmp];
            % Store the eigenvector of the selected pure node
            U(i, :) = X(idx_tmp, :);
            for j = 1:i-1
                U(i, :) = U(i, :) - U(i, :) * (U(j, :)' * U(j, :));
            end
            % Normalize the orthogonalized eigenvector
            U(i, :) = U(i, :) / norm(U(i, :));
            row_norm = row_norm - (X * U(i, :)').^2;
        end
    end
end
\end{lstlisting}
\bibliographystyle{elsarticle-num}
\bibliography{refMLBiMMSB}
\end{document}